\renewcommand{\paragraph}[1]{\medskip\noindent{\bf #1.}\xspace}
\renewcommand{\paragraph}{%
  \@startsection{paragraph}{4}%
  {\z@}{1ex \@plus 1ex \@minus .2ex}{-1em}%
  {\normalfont\normalsize\bfseries}%
}
\definecolor{ForestGreen}{rgb}{0.1333,0.5451,0.1333}
\definecolor{DarkRed}{rgb}{0.8,0,0}
\definecolor{Red}{rgb}{1,0,0}
\def\cH{\mathcal{H}}
\def\cA{\mathcal{A}}
\def\cC{\mathcal{C}}
\newcommand{\eps}{\varepsilon}
\renewcommand{\varepsilon}{\epsilon}
\newcommand{\st}{{\sf ST}\xspace}
\newcommand{\mcds}{{\sf MCDS}\xspace}
\newcommand{\scs}{{\sf SCS}\xspace}
\newcommand{\mis}{{\sf MIS}\xspace}
\newcommand{\rmds}{{\sf RMDS}\xspace}
\newcommand{\bmds}{{\sf BMDS}\xspace}
\newcommand{\dimension}{\textsc{dim}}
\newconstruct{\PROC}{\textbf{procedure}}{}{\ENDPROC}{\textbf{end on}}
\def\polylog{\operatorname{polylog}}
\newcommand{\shortOnly}[1]{\ifthenelse{\boolean{short}}{#1}{}}
\newcommand{\onlyShort}[1]{\ifthenelse{\boolean{short}}{#1}{}}
\newcommand{\longOnly}[1]{\ifthenelse{\boolean{short}}{}{#1}}
\newcommand{\onlyLong}[1]{\ifthenelse{\boolean{short}}{}{#1}}
\def\thmt@refnamewithcomma #1#2#3,#4,#5\@nil{%
  \@xa\def\csname\thmt@envname #1utorefname\endcsname{#3}%
  \ifcsname #2refname\endcsname
    \csname #2refname\expandafter\endcsname\expandafter{\thmt@envname}{#3}{#4}%
  \fi
}
\spnewtheorem{observation}{Observation}{\bfseries}{\itshape}
\def\danupon#1{\marginpar{$\leftarrow$\fbox{D}}\footnote{$\Rightarrow$~{\sf #1 --Danupon}}}
\def\hsinhao#1{\marginpar{$\leftarrow$\fbox{H}}\footnote{$\Rightarrow$~{\sf #1 --Hsin-Hao}}}
\def\danupon#1{}
\def\hsinhao#1{}
\title{Distributed Symmetry Breaking in Hypergraphs}
\titlerunning{Distributed Symmetry Breaking in Hypergraphs}
\author{
Shay Kutten\inst{1}\fnmsep\thanks{Research supported in part by the Israel Science Foundation and by the Technion TASP center.}
\and 
Danupon Nanongkai\inst{2}\fnmsep\thanks{This work was partially done while at ICERM, Brown University USA and Nanyang Technological University, Singapore.}
\and
Gopal Pandurangan\inst{3}\fnmsep\thanks{This work was done while at Nanyang Technological University and Brown University. Research supported in part by the following research grants: Nanyang Technological University grant M58110000, Singapore Ministry of Education (MOE) Academic Research Fund (AcRF) Tier 2 grant MOE2010-T2-2-082, Singapore MOE  AcRF Tier 1 grant MOE2012-T1-001-094, and a grant from the US-Israel Binational Science Foundation (BSF).}
\and
Peter~Robinson\inst{4}\thanks{Research supported by the grant Fault-tolerant Communication Complexity in Wireless Networks from the Singapore
MoE AcRF-2.}
}
\authorrunning{D. Nanongkai and H. Su}
\institute{Faculty of IE\&M, Technion, Haifa, Israel \and Faculty of Computer Science, University of Vienna, Austria \and Department of Computer Science, University of Houston, USA \and Department of Computer Science, National University of Singapore}
\begin{document}

\maketitle

\begin{abstract}

Fundamental local symmetry breaking  problems such as Maximal Independent Set (MIS) and coloring have been recognized as important by the community, and studied extensively in (standard) graphs. In particular, fast (i.e., logarithmic run time)   randomized algorithms are well-established for MIS and $\Delta +1$-coloring in both the LOCAL and CONGEST distributed computing models.  On the other hand, comparatively much less is known on the complexity of distributed symmetry breaking in {\em hypergraphs}.
In particular,  a key question is  whether a fast (randomized) algorithm for MIS exists  for hypergraphs.

In this paper, we study the distributed complexity of symmetry breaking in hypergraphs   by presenting distributed randomized algorithms for a variety of fundamental problems under a natural distributed computing model for hypergraphs.    We first show that MIS in hypergraphs (of arbitrary dimension) can be solved in $O(\log^2 n)$ rounds  ($n$ is the number of nodes of the hypergraph) in the LOCAL  model. 
We then present a key result of this paper ---  an $O(\Delta^{\eps}\polylog n)$-round   hypergraph MIS algorithm in the CONGEST model  where  $\Delta$ is the maximum node degree of the hypergraph and
$\eps > 0$ is any arbitrarily small constant. We also present distributed algorithms for coloring, maximal matching, and maximal clique in hypergraphs.

To demonstrate the usefulness 
of hypergraph MIS, we present applications of our hypergraph algorithm to solving problems in (standard) graphs.
In particular, the hypergraph MIS yields fast distributed algorithms for the  {\em balanced minimal dominating set} problem (left open in Harris et al. [ICALP 2013]) and the {\em minimal connected dominating set  problem}.  

Our work shows that while some local symmetry breaking problems such as coloring can be solved in polylogarithmic rounds in both the LOCAL and CONGEST models, for many other hypergraph problems  such as MIS, hitting set, and maximal clique, it remains challenging to obtain polylogarithmic time algorithms in the CONGEST model.  This work is a step towards understanding this dichotomy in the complexity of hypergraph problems as well
as using hypergraphs to design fast distributed algorithms for problems in (standard) graphs.

\end{abstract}

\tikzstyle{v}=[circle,draw=black,fill=white!30,thick,inner sep=2pt,minimum size=7mm,circular drop shadow]
\tikzstyle{b}=[v,double]

\onlyShort{\vspace{-0.4in}}
\section{Introduction} \label{sec:intro}
\onlyShort{\vspace{-0.1in}}
The importance, as well as the difficulty, of solving problems on hypergraphs was pointed out recently by Linial, in his Dijkstra award talk \cite{linial-talk}.
While standard graphs\footnote{Henceforth,
when we say a graph, we just mean a standard (simple) graph.} model {\em pairwise} interactions well,
 hypergraphs  can be used to model {\em multi-way} interactions. For example, social network interactions include several individuals as a group, biological interactions involve several entities (e.g., proteins) interacting at the same time, distributed systems can involve several agents working together, or multiple clients who share a server (e.g., a cellular base station), or multiple servers who share a client, or shared channels in a wireless network. In particular, hypergraphs are especially useful in modelling social networks (e.g., \cite{wasserman}) and
 wireless networks (e.g., \cite{avin}).
Unfortunately, as pointed out by Linial, much less is known for hypergraphs than for graphs.
The focus of this paper is studying the complexity of fundamental local
symmetry breaking
problems in {\em hypergraphs}\footnote{Formally, a hypergraph $(V,F)$
consists of a set of (hyper)nodes $V$ and a collection $F$ of subsets of $V$; the sets that belong to $F$ are called  {\em hyperedges}.
The {\em dimension} of a hypergraph is the maximum number of hypernodes that belong to a hyperedge. Throughout, we will use $n$ for the number of nodes,
$m$ for the number of hyperedges,  and $\Delta$ for the degree of the hypergraph which is the maximum node degree (i.e., the maximum number of edges a node is in).
 A standard graph is a  hypergraph of dimension 2.}. A related goal is to utilize these hypergraph algorithms for solving (standard) graph problems.

%
In the area of distributed computing for (standard) graphs,
fundamental local symmetry breaking problems such as Maximal Independent Set (MIS) and coloring have been studied extensively (see e.g., \cite{Luby86,Linial92,elkin-book,Pel00,kuhn-local} and the references therein).
Problems such as MIS and coloring  are ``local'' in the sense
that a solution can be {\em verified} easily by purely local  means (e.g.,  each node  communicating only with its neighbors),
but the  solution itself  should satisfy a  global property
(e.g., in the case of coloring, every node in the graph should have a color different from its neighbors and the total number of colors is at most $\Delta + 1$, where $\Delta$ is the maximum node degree).
Computing an MIS or coloring locally is non-trivial because of the difficulty of {\em symmetry breaking}: nodes have to decide on their choices  (e.g., whether they belong to the MIS or not) by only looking at a {\em small} neighbourhood around it. (In particular,
to get an algorithm running in $k$ rounds,
each node $v$ has to make its decision by looking only
at information on nodes within
distance $k$ from it.)
Some of the most celebrated results in distributed algorithms are  such fast localized algorithms.
In particular, $O(\log n)$-round (randomized) distributed algorithms are well-known for MIS \cite{Luby86} and $\Delta +1$-coloring \cite{elkin-book} in both the LOCAL and CONGEST distributed computing models \cite{Pel00}. 
%

Besides the interest in understanding the complexity of  fundamental problems,
the solutions to such localizable symmetry breaking problems had many obvious applications. Examples are scheduling (such as avoiding the collision of radio transmissions, see e.g. \cite{ephremides},
 \cite{chlamtac-kutten},
 or matching nodes such that each pair can communicate in parallel to the other pairs, see e.g. \cite{d2matching}), resource management (such as assigning clients to servers, see, e.g. \cite{azar-naor-rom}),
and even for obtaining $O(Diameter)$ solutions to global problems that cannot be solved locally, such as MST computation \cite{DBLP:journals/siamcomp/GarayKP98,KDOM}.

In contrast to graphs which have been extensively studied in the context of distributed algorithms,
many problems become much more challenging in the context of hypergraphs. An outstanding example is the  MIS problem, whose local solutions for graphs were mentioned above.
On the other hand, in hypergraphs (of arbitrary dimension) the complexity of MIS  is wide open. (In a hypergraph, an MIS is a maximal subset $I$ of hypernodes such that no subset of $I$ forms an hyperedge.)
Indeed, determining the parallel complexity (in the PRAM model) of the Maximal Independent Set (MIS) problem in hypergraphs (for arbitrary dimension) remains as one of the most important open problems in parallel computation; in particular, a key open problem is whether there exists a
polylogarithmic time PRAM algorithm \cite{karp-ram,BeameL90,Kelsen92}.   As discussed later, efficient CONGEST model distributed algorithms 
that uses simple local computations will also give efficient PRAM algorithms. 


\onlyShort{\vspace{-0.15in}}
\subsection{Main Results}
\onlyShort{\vspace{-0.1in}}


We present distributed (randomized) algorithms for a variety of fundamental problems under a natural distributed computing model for hypergraphs (cf. Section \ref{sec:prelim}). 

\paragraph	{Hypergraph MIS.}
A main focus is the hypergraph MIS problem which has been the subject of extensive research in
the PRAM model  (see e.g., \cite{karp-ram,KarpUW88,Kelsen92,BeameL90,Luczak}).  We first show that MIS in hypergraphs (of arbitrary dimension) can be solved in  $O(\log^2 n)$ distributed rounds ($n$ is the number of nodes of the hypergraph) in the LOCAL  model (cf. Theorem \ref{thm:mis}).
We then present  an $O(\Delta^{\eps} \polylog n)$ round algorithm for finding a MIS in hypergraphs of arbitrary dimension in the CONGEST model,  where $\Delta$ is the maximum degree of the hypergraph 
(we refer to Theorem \ref{thm:mis} for a precise statement of the bound) and $\eps > 0$ is any small positive constant. 
In the distributed computing model (both LOCAL and CONGEST), computation within a node is free; in one round, each node is allowed to compute any function of its current data. However, in our CONGEST model algorithms,  each processor will perform very simple computations (but this is not true in the LOCAL model). In particular, each step of any node $v$ can be simulated in $O(d_v)$  time by a single processor or in $O(\log m)$  time with $d_v$ processors.  Here, $d_v$ is the degree of
the node in the {\em server-client} computation model --- cf. Section \ref{sec:prelim};   $d_v = O(m)$, where $m$ is the number of hyperedges.  From these remarks, it follows that our algorithms can be simulated on the PRAM model to within an $O(\log m)$ factor slowdown using $O(m+n)$ processors.
Thus our CONGEST model algorithm also implies a  PRAM algorithm for hypergraph MIS running  in $O(\Delta^{\eps}\polylog n \log m)$ rounds using a linear number of processors for a hypergraph of arbitrary dimension. 



\paragraph{Algorithms for standard graph problems using hypergraph MIS.}
In addition to the importance of hypergraph MIS as a hypergraph problem, we outline its importance to solving several natural symmetry breaking problems in (standard) graphs too. For  the results discussed below, we assume the CONGEST model.

Consider first the following  graph  problem
called  the {\em restricted minimal dominating set (RMDS)} problem which arises as a key subproblem in other problems
that we discuss later. We are given a (standard) graph $G = (V,E)$ and a subset of nodes $R \subseteq V$, such
that  $R$ forms a dominating set in $G$ (i.e., every node $v \in V$ is either adjacent to $R$ or belongs to $R$).
It is required to find a {\em minimal} dominating set {\em in $R$} that dominates $V$. 
(The minimality means that no subset of the solution can dominate $V$; it is easy to verify the minimality condition locally.)
Note that if $R$ is $V$ itself, the problem can be solved by finding a MIS of $G$, since a MIS is also a minimal dominating set (MDS); hence an $O(\log n)$ algorithm exists. However, if $R$ is some arbitrary proper subset of $V$ 
(such that $R$ dominates $V$), then
no  distributed algorithm running even in sublinear (in $n$) time (let alone polylogarithmic time) is known.
Using our hypergraph MIS algorithm, we design a  distributed algorithm for RMDS  running  in $O(\min\{\Delta^{\eps}\polylog n, n^{o(1)}\})$ rounds in the CONGEST model ($\Delta$ is the maximum node degree of the graph) --- cf., Section \ref{sec:rmds}.

RMDS arises naturally as the key subproblem in the solution of other  problems, in particular, the {\em balanced minimal dominating set (BMDS)} problem \cite{balanced-minimal}
and the {\em minimal connected dominating set (MCDS)} problem. Given a (standard) graph, the BMDS problem (defined formally in Section \ref{sec:bmds})  asks for a minimal dominating set whose average degree is small with respect to the average degree of the graph; this has applications to load balancing and fault-tolerance \cite{balanced-minimal}.  It was shown that such a set exists and can be found using a {\em centralized} algorithm \cite{balanced-minimal}. Finding a fast distributed algorithm  was a key problem left open in \cite{balanced-minimal}. In Section \ref{sec:mcds}, we use our hypergraph MIS algorithm of Section~\ref{sec:hyper} to present an  $\tilde{O}(D+ \min\{\Delta^{\eps}, n^{o(1)}\})$ round algorithm (the notation $\tilde{O}$ hides a $\polylog n$ factor) for BMDS problem (in the CONGEST model), where
$D$ is the diameter (of the input standard graph) and $\Delta$ is the maximum node degree. 

The MCDS problem is a variant (similar to variants studied in the context of wireless networks, e.g. \cite{localized-cds}) of the well-studied
{\em minimum} connected dominating set problem (which is NP-hard) \cite{approx-min-connected,routing-min-connected}. 
  In the MCDS problem, we require a dominating set that is connected and is {\em minimal} (i.e.,
no subset of the solution is a MCDS). In contrast to the approximate minimum connected dominating set problem
(i.e., finding a connected dominating set that is not too large compared to the optimal) which admits efficient distributed algorithms \cite{Dubhashi,ghaffari} (polylogarithmic run time algorithms are known
for both the LOCAL and CONGEST model for the unweighted case), we show that it is impossible to obtain an efficient distributed algorithm for MCDS. 
 In Section \ref{sec:mcds}, we use our hypergraph MIS algorithm of Section \ref{sec:hyper}  as a subroutine to construct a distributed algorithm for MCDS that runs in time
$\tilde O(D (D\min\{\Delta^{\eps}, n^{o(1)}\} +\sqrt{n}) )$. We also show that $\tilde \Omega(D + \sqrt{n})$ is a lower bound
on the run time for any distributed  MCDS algorithm.

\paragraph{Algorithms for other hypergraph problems.}
Besides MIS (and the above related standard graph problems), we also study distributed algorithms for coloring, maximal matching,   and maximal clique in hypergraphs\onlyLong{.}\onlyShort{ in the full paper.}  We show that a $\Delta+1$-coloring of a hypergraph (of any arbitrary dimension) can be computed in $O(\log n)$ rounds (this generalizes the result for standard graphs).  We also show that maximal matching in hypergraphs can be solved in $O(\log m)$ rounds.
Maximal clique is a less-studied problem, even in the case of graphs, but nevertheless interesting.
Given a (standard) graph $G=(V,E)$,  a maximal clique (MC) $L$ is subset of $V$ such that $L$ is a clique in $G$
and is maximal (i.e., it is not contained in a bigger clique). MC is related to MIS since any MIS in the complement graph $G^c$ is an MC in $G$.
For a hypergraph, one can define an MC with respect to the server graph (cf. Section \ref{sec:prelim}). 
Finding MC has applications in finding a {\em non-dominated coterie} in quorum systems \cite{makino}.
We show that an MC in a hypergraph can be found in  $O(\dimension \log n)$ rounds, where $\dimension$ is the dimension of the hypergraph and $n$ is the number of nodes. All the above results hold in the CONGEST model as well. 

\onlyShort{\vspace{-0.15in}}
\subsection{Technical Overview and Other Related Work}
\onlyShort{\vspace{-0.1in}}
We study two natural network models for computing with hypergraphs --- the {\em server-client} model
and the {\em vertex-centric} models (cf. Section \ref{sec:prelim}). The server-client model is commonly used in packing and covering problems such as set cover and packing LPs (e.g., \cite{Suomela13,AstrandS10,PapadimitriouY93,KuhnMW06,BartalBR97,Kuhn2005-thesis}).  It is also a natural model for the facility location problem (e.g., \cite{MoscibrodaW05,PanditP09}).  The vertex-centric model was considered in, e.g., \cite{KoufogiannakisY11}. 
%
%
%
Our algorithmic results apply to both models (except the one on maximal matching). 

The distributed MIS problem on hypergraphs is significantly more challenging than that on (standard) graphs.
Simple variants/modifications of the distributed algorithms on graphs (e.g.,  Luby's algorithm and its variants \cite{Luby86,MetivierRSZ11,Pel00})  do not seem to work for higher dimensions, even
for hypergraphs of dimension 3. For example, running Luby's algorithm or its permutation variant \cite{Luby86} on a (standard) graph by replacing each hyperedge with a clique does not work --- in the graph there can be only one
node in the MIS, whereas in the hypergraph all nodes of the clique, except one, can be in the MIS. 
It has been conjectured by Beame and Luby \cite{BeameL90} that a generalisation of the permutation variant of an algorithm due to Luby \cite{Luby86} can give
a $\polylog(m+n)$ run time in the PRAM model, but this has not been proven so far (note that this bound itself can be large, since $m$ can be exponential in $n$). 
%

\danupon{Papers that are very related to us but I don't know where to mention it is \cite{AstrandS10} where they approximate set cover. They also cite a few other papers related to this problem.}


Our distributed hypergraph MIS algorithm (Section \ref{sec:hyper}) consists of several ingredients. A key ingredient is the {\em decomposition lemma} (cf. Lemma \ref{thm:decomposition congest}) that shows that the problem can be reduced to solving a MIS problem in a low diameter network.  The lemma is essentially an application of the {\em network decomposition} algorithm of Linial and Saks \cite{LinialS93}.  This applies to the CONGEST model as well --- the main task in the proof is to show that the Linial-Saks decomposition works for (both) the hypergraph models in the CONGEST setting. The polylogarithmic run time bound for the LOCAL model follows easily 
from the decomposition lemma. However, this approach fails in the CONGEST model, since it involves collecting a lot of information at some nodes. The next ingredient is to show how the PRAM algorithm of Beame and Luby \cite{BeameL90} can be simulated efficiently in the distributed setting; this we show is
possible in a low diameter graph.  Kelsen's analysis \cite{Kelsen92} of Beame-Luby's algorithm (which shows a polylogarithmic time bound  in the PRAM model for {\em constant} dimension hypergraphs) immediately gives a polylogarithmic round algorithm in the CONGEST
model for a hypergraph of {\em constant} dimension.  To obtain the $\tilde{O}(\Delta^{\eps})$ algorithm  (for any constant $\eps > 0$) for 
a hypergraph of {\em arbitrary} dimension in the CONGEST model, we use another 
ingredient: we generalize a theorem of Turan  (cf. Theorem \ref{thm:Turan}) for hypergraphs --- this shows that a hypergraph of low average degree has a  {\em large} independent set. We show further that such a large independent set can be found  when the network diameter is $O(\log n)$.  Combining this theorem
with the analysis of Beame and Luby's algorithm gives the result for the CONGEST model for any dimension. Our CONGEST model algorithm, as pointed out earlier, also implies a $\tilde{O}(\Delta^{\eps})$ round algorithm for the PRAM model.
 Recently, independently of our result,
 Bercea et al.\cite{aravind2} use a similar approach to obtain an improved algorithm for the PRAM model. In particular, they improve Kelsen's analysis of Beame-Luby algorithm to apply also for slightly {\em super-constant} dimension. This improved analysis of Kelsen also helps us
 in obtaining a slightly better bound (cf. Theorem \ref{thm:mis}).
 
We apply our hypergraph  MIS algorithm  to solve two key problems --- BMDS and MCDS. 
The BMDS problem was posed in Harris et al. \cite{balanced-minimal}, but no efficient distributed algorithm was known.
A key bottleneck was solving the RMDS problem which appears as a subroutine in solving BMDS.
In the current paper, we circumvent this bottleneck by treating the RMDS problem as a problem on hypergraphs.

The MCDS problem, to the best of our knowledge, has not been considered before and seems significantly
harder to solve in the distributed setting compared to the more well-studied approximate version of the connected dominating set problem \cite{Dubhashi,ghaffari}. The key difficulty is being {\em minimal} with respect to {\em both}
connectivity and domination. We use a layered approach to the problem, by first constructing a breadth-first tree (BFS) and then adding nodes to the MCDS, level by level of the tree (starting with the leaves).
\onlyLong{
We make sure that nodes added to the MCDS in level $i$ dominates the nodes in level $i+1$ and is also minimal. To be minimal with respect to connectivity we cluster nodes that are in MCDS at level $i+1$ by connected components and treat these as super-nodes. To minimally dominate these super nodes we use the hypergraph MIS algorithm; however there is a technical difficulty of simulating
the hypergraph algorithm on super-nodes. We show that such a simulation can be done efficiently by reducing the dimension of the constructed hypergraph (\onlyLong{cf. Lemma \ref{lem:logn}}\onlyShort{cf. Lemma 4.4 in the full paper in Appendix}) which show that hypergraph MIS on a hypergraph of arbitrary dimension can be reduced to solving a equivalent problem in 
a hypergraph of $\polylog(m+n)$ dimension with only $O(\log n)$ factor slow down. 
}
%
%
We also show a lower bound of $\tilde{\Omega}(D+\sqrt{n})$ for the MCDS problem  by using the techniques
of Das Sarma et al. \cite{STOC11}. This lower bound holds even when $D=\polylog n$. In this case, our upper bound is tight up to a $\polylog n$ factor.
We also show that $\Omega(D)$ is a {\em universal} lower bound for MCDS as well as for maximal clique and spanning tree problems, i.e., it applies essentially to all graphs. \onlyShort{These are shown in the full paper.}

\onlyShort{\vspace{-0.1in}}
\section{Preliminaries}
\label{sec:prelim}
\onlyShort{\vspace{-0.1in}}
A hypergraph $\cH$ consists of a set $V(\cH)$ of $n$  (hyper)nodes and a set family $E(\cH)$ of $m$ hyperedges, each of which  is a subset of $V(\cH)$. 
We define the \emph{degree of node $u$} to be the total number of hyperedges that $u$ is contained in. 
Furthermore, we define the \emph{degree of the hypergraph}, denoted by $\Delta$, as the maximum over all hypernode degrees.
The size of each hyperedge is bounded by the \emph{dimension $\dimension$} of $\cH$; note that a hypergraph of dimension $2$ is a graph. 

We now introduce our main model of computation.
In our distributed model, $\cH$ is realized as a (standard) undirected bipartite graph $G$ with vertex sets $S$ and $C$ where $|S|=n$ and $|C|=m$.
We call $S$ the set of \emph{servers} and $C$ the set of \emph{clients} and denote this realization of a hypergraph as the \emph{server-client model}.
That is, every vertex in $S$ corresponds to a vertex in $\cH$ and every vertex in $C$ corresponds to a hyperedge of $\cH$.
For simplicity, we use the same identifiers for vertices in $C$ as for the hyperedges in $\cH$.
There exists a ($2$-dimensional) edge in $G$ from a server $u \in S$ to a client $e \in C$ if and only if $u \in e$. 
See \Cref{fig:hyper} for an example.
Thus, the degree of $\cH$ is precisely the maximum degree of the servers and the dimension of $\cH$ is given by the maximum degree of the clients.
\begin{figure*}
\subcaptionbox{\label{fig:hyper}}[0.25\linewidth]{
  \begin{tikzpicture}[scale=0.7,every node/.style={scale=0.7}]
  \small
  \node[v] (u1) at (0,0) {$u_1$};
  \node[v,below right of=u1,xshift=0.5cm,yshift=-0.5cm] (u2) {$u_2$};
  \node[v,below left of=u2,xshift=-0.5cm,yshift=-0.5cm] (u3) {$u_3$};
  \node[v,below right of=u3,xshift=0.5cm,yshift=-0.5cm] (u4) {$u_4$};

  \begin{pgfonlayer}{background}
  \begin{scope}[fill opacity=0.8]
    \filldraw[fill=yellow!70] ($(u1)+(-0.5,0)$) 
        to[out=90,in=180] ($(u1) + (0,0.5)$) 
        to[out=0,in=90] ($(u2) + (0.5,0)$) 
        to[out=270,in=0] ($(u3) + (0,-0.5)$)
        to[out=180,in=270] ($(u3) + (-0.5,0)$)
        to[out=90,in=225] ($(u2)+(-0.7,-0.2)$)
        to[out=45,in=270] ($(u1)+(-0.5,0)$);

    \filldraw[fill=green!70,thick,dashed] ($(u2)+(-0.5,0.2)$)
        to[out=90,in=90] ($(u2)+(0.5,0.2)$)
        to[out=270,in=90] ($(u4)+(0.5,-0.2)$)
        to[out=270,in=270] ($(u4)+(-0.5,-0.2)$)
        to[out=90,in=270] ($(u2)+(-0.5,0.2)$);

    \filldraw[fill=red!70,thick,dotted] ($(u3)+(-0.5,0.2)$)
        to[out=115,in=135] ($(u3)+(0.3,0.5)$)
        to[out=315,in=115] ($(u4)+(0.5,0)$)
        to[out=315,in=315] ($(u4)+(-0.3,-0.4)$)
        to[out=135,in=285] ($(u3)+(-0.5,0.2)$);
    \end{scope}
  \end{pgfonlayer}
\end{tikzpicture}
}
\subcaptionbox{\label{fig:bipartite}}[0.3\linewidth]{
\begin{tikzpicture}[scale=0.5,every node/.style={scale=0.6}]
\tikzstyle{link}=[-,black,thick,auto]
  \small
  \node (servers) {servers\phantom{l}};
  \node[xshift=0.3cm,right of=servers] (clients) {clients};
  \node[v,below of=servers] (u1) {$u_1$};
  \node[v,below of=u1] (u2) {$u_2$};
  \node[v,below of=u2] (u3) {$u_3$};
  \node[v,below of=u3] (u4) {$u_4$};

  \draw[gray,dashed] ($(servers.north)+(0.65,0)$) -- ($(u4.south)+(0.65,0)$);

  \node[v,below right of=u1,xshift=0.5cm,fill=yellow!70] (e1) {$e_1$};
  \node[v,below right of=u2,xshift=0.5cm,fill=green!70,dashed] (e2) {$e_2$};
  \node[v,below right of=u3,xshift=0.5cm,fill=red!70,dotted] (e3) {$e_3$};

  \draw[link] (u1) to (e1);
  \draw[link] (u2) to (e1);
  \draw[link] (u3) to (e1);
  \draw[link] (u2) to (e2);
  \draw[link] (u4) to (e2);
  \draw[link] (u3) to (e3);
  \draw[link] (u4) to (e3);
\end{tikzpicture}
}
\subcaptionbox{\label{fig:server}
}[0.2\linewidth]{
\begin{tikzpicture}[scale=0.5,every node/.style={scale=0.6}]
\tikzstyle{link}=[-,black,thick,auto]
  \small
  \node[v] (u1) at (0,0) {$u_1$};
  \node[v,right of=u1,xshift=0.5cm] (u2) {$u_2$};
  \node[v,below of=u1,yshift=-0.5cm] (u3) {$u_3$};
  \node[v,below of=u2,yshift=-0.5cm] (u4) {$u_4$};

  \draw[link] (u1) to (u2);
  \draw[link] (u1) to (u3);
  \draw[link] (u2) to (u3);
  \draw[link] (u2) to (u3);
  \draw[link] (u2) to (u4);
  \draw[link] (u4) to (u3);
\end{tikzpicture}
}
\caption{\small Figure~(\subref{fig:hyper}) depicts a hypergraph consisting of vertices $u_1,\dots,u_4$ and edges $e_1=\{u_1,u_2,u_3\}$, $e_2=\{u_2,u_4\}$, and $e_3=\{u_3,u_4\}$. Figures~(\subref{fig:bipartite}) and (\subref{fig:server}) respectively show this hypergraph in the bipartite server-client model and the vertex-centric model.}
\end{figure*}
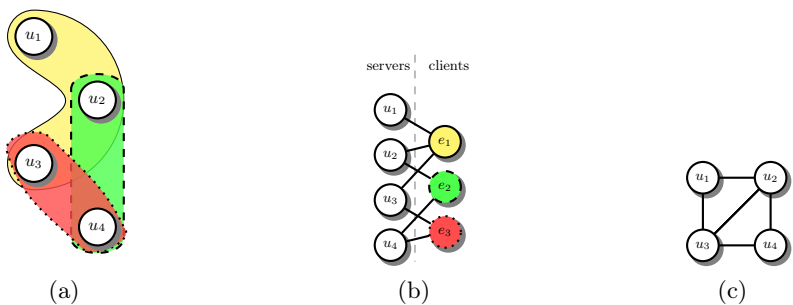

An alternative way to model a hypergraph $\cH$ as a  distributed network is the \emph{vertex-centric} model (cf.\ \Cref{fig:server}). Here, the nodes are exactly the nodes of $\cH$ and there exists a communication link between nodes $u$ and $v$ if and only if there exists a hyperedge $e \in E(\cH)$ such that $u, v \in e$.
Note that in this model, we assume that every node locally knows all hyperedges in which it is contained.
For any hypergraph $\cH$, we call the above underlying communication graph in  the vertex-centric model (which is a standard graph) the {\em server graph}, denoted by $G(\cH)$. 


We consider the standard synchronous round model (cf.\ \cite{Pel00}) of communication.
That is, each node has a unique id (arbitrarily assigned from some set of size polynomial in $n$) and executes an instance of a distributed algorithm that advances in discrete {\em rounds}.
To correctly model the computation in a hypergraph, we assume that each node knows whether it is a server or a client.
In each round every node can communicate with its neighbors (according to the edges in the server-client graph) and perform some local computation.
We do not assume shared memory and nodes do not have any a priori knowledge about the network at large.

We will consider two types of  models --- CONGEST and LOCAL \cite{Pel00}. In the CONGEST model, only a $O(\log n)$-sized
message can be sent across a communication edge per round. In the LOCAL model, there is no such restriction.
Unless otherwise stated, we use the CONGEST model in our algorithms.

\onlyShort{Due to lack of space, the complete proofs can be found in the full paper.}
\onlyShort{\vspace{-0.25cm}}
\section{Distributed Algorithms for Hypergraph MIS Problem}
\onlyShort{\vspace{-0.2cm}}
\label{sec:hyper}
We present randomized distributed algorithms and
prove the following for the hypergraph MIS problem:

\begin{theorem} \label{thm:mis}
The hypergraph MIS problem can be solved  in the following expected time\footnote{Our time bounds can also be easily shown to hold with high probability, i.e., with probability $1 -1/n$.} in both vertex-centric and server-client representations.  
\begin{compactenum}
	\item $O(\log^2 n)$ time in the LOCAL model.
	\item  $O(\log^{(d+4)!+4} n)$ time\footnote{As is common, we use the notation $\log^f n$ which is the same as $(\log n)^f$.} in the CONGEST model when the input hypergraph has constant dimension $d$. 
	\item $O(\min\{\Delta^{\epsilon}\log^{(1/\epsilon)^{O(1/\epsilon)}} n, \sqrt{n}\})$ time in the CONGEST model for any dimension, where $\epsilon$ is such that $1\geq \epsilon \geq \frac{1}{\frac{\log\log n}{c\log\log\log n}-1}$ from some (large) constant $c$. (In particular, $\Delta^{\epsilon}\log^{(1/\epsilon)^{O(1/\epsilon)}} n$ becomes $\Delta^{o(1)}n^{o(1)}$ when we use $\epsilon =  \frac{1}{\frac{\log\log n}{c\log\log\log n}-1}$.)
\end{compactenum}
\end{theorem}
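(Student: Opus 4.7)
The plan is to prove all three bounds via a common scheme: reduce the global MIS problem to many independent MIS subproblems on low-diameter subgraphs using a network decomposition, then solve each subproblem with the right subroutine for the model.

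For Part~1 (LOCAL), my first step is to invoke the decomposition lemma (Lemma~\ref{thm:decomposition congest}), which adapts Linial--Saks to produce, in $O(\log^2 n)$ rounds, an $(O(\log n), O(\log n))$-decomposition: nodes are partitioned into clusters of diameter $O(\log n)$, each cluster receives one of $O(\log n)$ colors, and any two clusters sharing a hyperedge get different colors. I would then process the color classes sequentially. Within a color class the MIS subproblems on distinct clusters are independent, so each cluster can work in parallel: elect a leader, have it gather the entire induced sub-hypergraph in $O(\log n)$ rounds (message sizes are unbounded in LOCAL), compute a MIS centrally, and broadcast the assignment in $O(\log n)$ rounds. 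Summing over $O(\log n)$ color classes gives $O(\log^2 n)$ rounds.

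For Part~2 (CONGEST, constant dimension $d$), I would keep the same decomposition skeleton but abandon centralized gathering, which CONGEST cannot afford. Instead, inside each cluster I would simulate the Beame--Luby parallel MIS algorithm \cite{BeameL90} one step at a time. Each Beame--Luby step consists of independent coin flips by servers together with a constant-depth exchange with their incident clients, so in a cluster of diameter $O(\log n)$ one step takes $O(\log n)$ CONGEST rounds, with no congestion because each server only speaks to its own clients. Kelsen's analysis \cite{Kelsen92} caps the number of Beame--Luby iterations at $O(\log^{(d+4)!+2} n)$ for dimension $d$; multiplying by the $O(\log n)$ simulation overhead per iteration and by the $O(\log n)$ color classes yields $O(\log^{(d+4)!+4} n)$ total rounds.

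For Part~3 (CONGEST, arbitrary dimension), Kelsen's bound collapses as $d$ grows, so I would augment Beame--Luby with a hypergraph Tur\'an ingredient (Theorem~\ref{thm:Turan}): a hypergraph with low average degree has a large independent set, and I would show this set is constructible within a cluster of diameter $O(\log n)$ under CONGEST constraints. The outer loop then alternates controlled Beame--Luby bursts (to drive the surviving hypergraph's average degree below a threshold) with Tur\'an harvests that pick up an $\Omega(n/\bar{d}^{\epsilon})$-sized independent chunk and remove the hyperedges it satisfies; balancing the two as a function of $\epsilon$ gives the claimed $\Delta^{\epsilon}\log^{(1/\epsilon)^{O(1/\epsilon)}} n$ bound, and a direct fallback handles the regime where $\Delta$ is so large that $\sqrt{n}$ is the better bound. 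The main obstacle is making the Tur\'an step distributively constructive within a low-diameter cluster under $O(\log n)$-bit bandwidth, and accounting cleanly for the interaction between Beame--Luby phases (which only partially reduce the hypergraph) and Tur\'an phases so that progress compounds without double-counting hyperedges or inducing congestion.
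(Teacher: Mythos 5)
Your Parts 1 and 2 follow the paper's route: the Linial--Saks-based decomposition lemma, central gathering per cluster in LOCAL, and a per-cluster simulation of Beame--Luby in CONGEST. Two points in Part 2 are glossed over in a way that matters. First, a Beame--Luby step is not merely ``independent coin flips plus a constant-depth exchange'': the marking probability is $p=1/(2^{d+1}\zeta(\cH'))$, where $\zeta(\cH')$ is a \emph{global} maximum over all small vertex subsets $x$ of the quantities $d_{i-|x|}(x,\cH')$; computing this threshold is the crux of the simulation and is precisely why the $O(\log n)$ cluster diameter is needed (a convergecast over a BFS tree of the cluster). Second, your claim of ``no congestion'' is unjustified: the low-diameter subgraphs containing same-colored clusters overlap (the clusters only have weak-diameter guarantees), so simulating the algorithm on all same-colored subgraphs simultaneously does route messages over shared edges. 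The paper handles this by ensuring every edge lies in $O(\log^3 n)$ subgraphs, which, together with the $O(\log n)$ colors, is where the $\log^4 n$ overhead of the decomposition lemma comes from; your bookkeeping reaches the exponent $(d+4)!+4$ by a different and unsupported route.

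Part 3 is where the real gap is. You propose to alternate Beame--Luby ``bursts'' that reduce average degree with Tur\'an ``harvests'' of large independent sets. This inverts the roles of the two ingredients and does not work as stated: the hypergraph Tur\'an bound gives an independent set of size roughly $n/\delta^{1/(d-1)}$ only for hypergraphs all of whose hyperedges have size at least $d$, so it is useless once small hyperedges are present ($d=2$), and an independent set harvested this way cannot simply be added to a partial solution --- maximality is not preserved. The paper instead applies Tur\'an only to the sub-hypergraph $\cH'_d$ consisting of hyperedges of dimension \emph{at least} $d$; the resulting set $S$ therefore induces a hypergraph of dimension \emph{less than} $d$, i.e.\ constant, on which the Part-2 algorithm computes a genuine MIS $M'_S$; one then recurses on $V(\cH')\setminus S$ with hyperedges restricted to those that would be violated, and proves that $M'_S\cup M''$ is an MIS of $\cH'$. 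The recursion depth is $O(\Delta^{1/(d-1)})$ because each level removes an $\Omega(1/\Delta^{1/(d-1)})$ fraction of the nodes, and setting $d=1+1/\epsilon$ yields the stated bound. Finally, the $O(\sqrt{n})$ term is not a ``fallback'' of this scheme but an entirely separate algorithm (a distributed implementation of the Karp--Upfal--Wigderson permutation algorithm via random ranks, with $O(\sqrt{n})$ iterations each removing $\Omega(\sqrt{n})$ hypernodes in expectation) that needs neither the decomposition nor Tur\'an.
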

In Section \ref{sec:decomposition}, we prove a {\em decomposition lemma} which plays an important role in achieving all the above results. 

\onlyShort{\vspace{-0.2cm}}
\subsection{Low-Diameter Decomposition}\label{sec:decomposition}
\onlyShort{\vspace{-0.2cm}}

First, we note that, for solving MIS, it is sufficient to construct an algorithm that solves the following {\em subgraph-MIS} problem on low-diameter networks.

\begin{definition}[Subgraph-MIS Problem]
In the Subgraph-MIS problem, we are given an $n$-node network $G$. This network is either in a vertex-centric or server-client representation of some hypergraph $\cH$. Additionally, we are given a subnetwork $G'$ of $G$ representing a sub-hypergraph\footnote{Given a subset $V' \subseteq V$,  a sub-hypergraph of $\cH$ is simply a hypergraph induced by $V'$ --- except hyperedges that contain vertices that do not belong to $V'$, all other hyperedges of $\cH$ (which intersect with $V'$) are present in the sub-hypergraph.}   $\cH'$ of $\cH$. The goal is to find an MIS of $\cH'$. 
\end{definition}

\begin{lemma}[Decomposition Lemma] \label{thm:decomposition congest}
For any function $T$, if there is an algorithm $\cA$ that solves subgraph-MIS on CONGEST server-client (respectively vertex-centric) networks $G$ of $O(\log n)$ diameter in $T(n)$ time (where $n$ is the number of nodes in $G$), then there is an algorithm for MIS on CONGEST server-client (respectively vertex-centric) networks of {\em any} diameter that takes $O(T(n)\log^4 n)$ time. 
\end{lemma}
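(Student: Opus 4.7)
The plan is to apply a low-diameter network decomposition in the style of Linial--Saks and then sweep the resulting color classes, invoking the subgraph-MIS algorithm $\cA$ on each cluster of the current color in parallel. Concretely, I would first construct a partition of the hypernodes into clusters such that (a) each cluster has diameter $O(\log n)$ in the underlying communication network, and (b) the clusters are colored with $O(\log n)$ colors so that any two clusters of the same color share no hyperedge. In the CONGEST server-client and vertex-centric models this amounts to executing the standard Linial--Saks randomized construction while treating a pair of hypernodes sharing a hyperedge as adjacent in the ``conflict'' relation. Each step of Linial--Saks can be simulated in $O(1)$ CONGEST rounds, since the only information that needs to propagate across a hyperedge is a cluster identifier and a few random bits, each of size $O(\log n)$ and thus compatible with CONGEST bandwidth.

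Next, I would process the $O(\log n)$ color classes sequentially. At the start of phase $i$, each hypernode has been classified as either in the MIS, excluded, or still undecided. In phase $i$ every cluster $K$ of color $i$ applies $\cA$ in parallel to the sub-hypergraph $\cH'$ of $\cH$ induced by the undecided nodes of $K$ (with hyperedges that already contain an MIS-member removed, and hyperedges that leave $K$ restricted to their intersection with $K$'s undecided part). Because same-color clusters share no hyperedge, these parallel runs are non-interfering and can be executed simultaneously in $T(n)$ rounds using $\cA$, each cluster being an $O(\log n)$-diameter input. A straightforward induction on the phase index shows that after all $O(\log n)$ colors have been swept the union of selected nodes is independent (hyperedges shared between phases are saturated in the earlier phase and hence dropped from $\cH'$ in the later one), while every undecided vertex has been offered to $\cA$ together with all of its remaining constraints; hence the output is both independent and maximal.

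The main obstacle will be the first step: carefully verifying that Linial--Saks goes through in the hypergraph CONGEST models and quantifying the simulation overhead. The subtlety is that the clusters must be close in the \emph{conflict} sense (so that the non-interference guarantee in phase $i$ holds) while the algorithm is actually executed on either the bipartite server-client network or the vertex-centric server graph, and the $O(\log n)$-bit bandwidth must not be exceeded. Granting this, the decomposition costs $\polylog(n)$ rounds and the color sweep costs $O(T(n)\log n)$ rounds, which together can be bounded loosely by $O(T(n)\log^4 n)$; the extra $\polylog n$ slack absorbs the cost of the one-time decomposition as well as the hypergraph-to-standard-graph simulation overhead incurred during it.
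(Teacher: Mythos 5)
Your overall strategy is the same as the paper's: run a Linial--Saks-style $O(\log n)$-color, $O(\log n)$-diameter decomposition on the conflict graph of the hypernodes, then sweep the colors and invoke $\cA$ on each same-colored cluster on the residual sub-hypergraph. However, there is a genuine gap at the exact point you flag as the ``main obstacle'' and then wave away: the claim that same-color clusters are ``non-interfering and can be executed simultaneously in $T(n)$ rounds.'' Linial--Saks only guarantees \emph{weak} diameter: two nodes of a cluster $S_t$ are within distance $O(\log n)$ in $G$, but the subgraph induced by $S_t$ need not be connected or of small diameter. To hand $\cA$ an $O(\log n)$-diameter network you must simulate it on a containing subgraph $G_t \supseteq S_t$ of $G$ whose paths run through nodes and edges \emph{outside} $S_t$. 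Although the sets $S_{t_1}, S_{t_2}, \ldots$ of one color are hyperedge-disjoint, their containers $G_{t_1}, G_{t_2}, \ldots$ are not edge-disjoint, so the simultaneous simulations do contend for bandwidth in CONGEST and a single color class cannot be dispatched in $T(n)$ rounds as claimed. The paper's proof addresses precisely this: it augments the decomposition so that every edge of $G$ lies in only $O(\log^3 n)$ of the containers (by recording, during the broadcast of cluster IDs, which edges each ID traverses), which lets one schedule all same-color simulations in $O(T(n)\log^3 n)$ rounds and yields the $O(T(n)\log^4 n)$ total. Without some such congestion bound your argument only establishes the LOCAL-model version of the lemma; the $\log^4 n$ slack in your accounting is attributed to the wrong source (the one-time decomposition cost rather than the per-color simulation congestion).

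A secondary issue is your rule for forming the residual sub-hypergraph: you remove ``hyperedges that already contain an MIS-member.'' That is backwards. A hyperedge $e$ with some members already in the MIS and the rest undecided inside the current cluster is exactly the constraint that must be \emph{kept} (restricted to $e\cap S_t$), since adding all of $e\cap S_t$ would complete $e$ inside the independent set. The hyperedges that may safely be dropped are those containing a node already \emph{rejected} (or a node of a not-yet-processed higher color). As written, your reduction can output a non-independent set; the correctness induction you sketch relies on the corrected rule.
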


The main idea of the lemma is to run the {\em network decomposition} algorithm of Linial and Saks \cite{LinialS93} and simulate $\cA$ on each cluster resulting from the decomposition. The only part that we have to be careful is that running $\cA$ simultaneously on many clusters could cause a congestion. We show that this can be avoided by a careful scheduling. The details are as follows.

%

The network decomposition algorithm of \cite{LinialS93} produces an {\em $O(\log n)$-decomposition with weak-diameter $O(\log n)$}. That is, given a (two-dimensional) graph $G$, it partitions nodes into sets $S_1, S_2, \ldots S_k$ and assigns color $c_i\in \{1, 2, \ldots, O(\log n)\}$ to each set $S_i$ with the following properties: 
\begin{compactitem}
\item the distance between any two nodes in the same set $S_i$ is $O(\log n)$, and 
\item any two neighboring nodes of the same color must be in the same set (in other words, any two ``neighboring'' sets must be assigned different colors).
\end{compactitem}

This algorithm takes $O(\log^2 n)$ time even in the CONGEST model~\cite{LinialS93}. We use the above decomposition algorithm to decompose the server graph $G(\cH)$ (cf. \Cref{sec:prelim}) of the input hypergraph. The result is the partition of hypernodes (servers) into colored sets satisfying the above conditions (in particular, two nodes sharing the same hyperedge must be in the same partition or have differnet colors). In addition, we modify the Linial-Saks (LS) algorithm to produce low-diameter  subgraphs that contain these sets with the property that subgraphs of the same color have ``small overlap''.

\begin{lemma}\label{claim:decomposition}
Let $G$ be the input network (server-client or vertex-centric model) representing hypergraph $\cH$. In $O(\log^3 n)$ time and for some integer $k$, we can partition hypernodes into $k$ sets $S_1, \ldots, S_k$, produce $k$ subgraphs of $G$ denoted by $G_1, G_2, \ldots G_k$, and assign color $c_i\in \{1, 2, \ldots, O(\log n)\}$ to each subgraph $G_i$, with the following properties: 
\begin{compactenum}
\item For all $i$, $G_i$ has diameter $O(\log n)$ and $S_i\subseteq V(G_i)$. 
\item For any $S_i$ and $S_j$ that are assigned the same color (i.e. $c_i=c_j$), there is no hyperedge in $\cH$ that contains hypernodes (servers) in both $S_i$ and $S_j$. 
\item Every edge in $G$ is contained in $O(\log^3 n)$ graphs $G_{i_1}, G_{i_2}, \ldots$ 
\end{compactenum}
\end{lemma}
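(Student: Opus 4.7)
The plan is to obtain the decomposition by running the Linial--Saks network decomposition algorithm on the server graph $G(\cH)$ and then expanding each resulting cluster into a low-diameter connected subgraph of $G$. First, I would simulate Linial--Saks on $G(\cH)$; a single round of communication in $G(\cH)$ can be carried out in at most two rounds of $G$ by routing messages between two servers of a common hyperedge through the corresponding client (in the vertex-centric model this step is trivial since $G = G(\cH)$). This produces a partition $S_1,\dots,S_k$ of the hypernodes together with a color $c_i \in \{1,\dots,O(\log n)\}$ for each cluster such that $S_i$ has weak diameter $O(\log n)$ in $G(\cH)$ and such that no edge of $G(\cH)$ connects two distinct same-colored clusters. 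Since any hyperedge of $\cH$ induces a clique in $G(\cH)$, this condition directly yields property~2 of the lemma, and the decomposition itself runs in $O(\log^2 n)$ rounds.

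Next, for each cluster $S_i$ I would elect a leader $r_i \in S_i$ (say, the node of minimum ID) and define $G_i$ to be the union of BFS shortest paths in $G$ from $r_i$ to every node of $S_i$, with ties broken by a fixed global rule. Since $S_i$ has weak diameter $O(\log n)$ in $G$ (distances in $G(\cH)$ and $G$ agree up to a factor of $2$), each such path has length $O(\log n)$, so $G_i$ is a tree of depth $O(\log n)$ containing $S_i$, giving property~1. The construction proceeds color by color: within a single color class the clusters are vertex-disjoint in $G(\cH)$, so the BFS waves can be grown from all leaders of that color in parallel in $O(\log n)$ rounds. Summing over the $O(\log n)$ color classes and adding the cost of the decomposition itself gives the claimed $O(\log^3 n)$ total runtime.

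The main obstacle is property~3, the $O(\log^3 n)$ edge-overlap bound. An edge $(u,v)$ of $G$ can lie in $G_i$ only if some BFS path from $r_i$ to a node of $S_i$ traverses it, which forces $r_i$ to lie within $G$-distance $O(\log n)$ of $u$. I would argue, for each color class separately, that at most $O(\log^2 n)$ cluster leaders of that color can lie within this distance of $u$. This follows from the structure of a single Linial--Saks phase: clusters are grown as disjoint balls around randomly chosen leaders whose radii are selected so that each boundary expansion absorbs a constant fraction of the remaining candidate leaders, yielding a geometric bound on the number of leaders of a common color that can coexist within a single $O(\log n)$-ball. Summing the $O(\log^2 n)$ per-color bound over the $O(\log n)$ colors then gives the desired $O(\log^3 n)$ bound. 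The subtleties I expect to need to handle are (i) translating the ball-growing argument from $G(\cH)$ to $G$, which introduces only constant-factor metric distortion, and (ii) enforcing a consistent global tie-breaking rule for shortest paths so that different clusters' BFS trees do not overlap on $G$-edges beyond what the leader-proximity bound already accounts for.
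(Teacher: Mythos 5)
Your setup (simulate Linial--Saks on the server graph, route one server-graph hop through the common client in two rounds, inherit property~2 from the fact that each hyperedge is a clique in $G(\cH)$) matches the paper. The gap is in how you build the subgraphs $G_i$ and, consequently, in your argument for property~3. Your per-color claim --- that at most $O(\log^2 n)$ cluster leaders of a given color can lie within $G$-distance $O(\log n)$ of a fixed node $u$ --- is not something Linial--Saks provides, and it is false in general. A single iteration of Linial--Saks can produce $\Theta(n)$ clusters of the same color; the algorithm guarantees that a constant fraction of the \emph{vertices} are captured per iteration and that adjacent clusters get different colors, but it places no bound on the number of same-colored clusters inside a ball of radius $O(\log n)$ (e.g., many pairwise non-adjacent small clusters can crowd around $u$). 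The ``boundary expansion absorbs a constant fraction of the remaining candidate leaders'' step you invoke does not correspond to anything in the algorithm. A secondary problem with the same root: since the BFS paths from a leader $r_i$ to $S_i$ run through vertices \emph{outside} $S_i$, the trees of distinct same-colored clusters are not edge-disjoint, so ``growing all BFS waves of one color in parallel in $O(\log n)$ rounds'' is itself a CONGEST congestion claim that needs exactly the overlap bound you are trying to prove.

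The paper sidesteps all of this with one idea you are missing: define $G_y$ as the \emph{trace} of the broadcast of $y$'s ID during the simulation itself. Whenever a node $v$ forwards the ID of $y$ in some sub-iteration, the edges incident to $v$ (and their endpoints) are added to $G_y$. Then $S_y \subseteq V(G_y)$ and $G_y$ has $O(\log n)$ diameter because every node of $G_y$ can walk back to $y$ along the path the ID travelled; and property~3 is immediate: the whole simulation lasts $O(\log^3 n)$ rounds and in each round an edge is added to $O(1)$ subgraphs (it carries $O(1)$ IDs under CONGEST), so each edge lies in $O(\log^3 n)$ subgraphs. No separate leader election, no BFS construction, and no counting of leaders near a node is needed. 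If you want to salvage your route, you would have to replace the false leader-density claim with an argument of this congestion-accounting type anyway, at which point you have reconstructed the paper's proof.
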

Observe that the first two properties in \Cref{claim:decomposition} are similar to the guarantees of the Linial-Saks algorithm, except that \Cref{claim:decomposition} explicitly gives low-diameter graphs that contain the sets $S_1, \ldots, S_k$. The third property guarantees that such graphs have ``small congestion''. 
\onlyLong{
\begin{proof}
Note that the Linial-Saks algorithm works as follows. The algorithm runs in iterations where in the $i^{th}$ iteration it will output sets of color $i$. In the $i^{th}$ iteration, each vertex $y$ selects an integer radius $r_y\in \{1, \ldots, O(\log n)\}$ at random (according to some distribution). Then it broadcasts its ID and the value $r_y$ to all nodes within distance $r_y$ of it. For every node $v$, after receiving all such messages from other nodes, selects the node with highest ID from among nodes $y$ that sends their IDs to $v$; denote such node by $C(v)$. For any node $y$, define set $S_y$ as the set that contains every node $v$ that has $C(v)=y$ and its distance to $y$ is {\em strictly} less than $r_{y}$. We call $S_y$ the {\em set centered at $y$} (note that $y$ might not be in $S_y$). 
All sets in this iteration receives color $i$. The distance between every pair of nodes $u$ and $v$ in any set $S_y$ is $O(\log n)$ since their distance to $y$ is $O(\log n)$. We can guarantee that there are no two neighboring nodes $u$ and $v$ in different sets because otherwise $C(u)=C(v)$ (this crucially uses the fact that sets are formed by nodes $v$ whose distance to $C(v)$ is strictly less than $r_{C(v)}$). By carefully picking the distribution of $r_y$, \cite{LinialS93} shows that the number of iterations is $O(\log n)$. 

\smallskip
The following is one simple (although not the most efficient) way to simulate the above algorithm in the server-client CONGEST model to compute $S_1, \ldots, S_k$. We implement each iteration of the above algorithm in {\em sub-iterations}. In the beginning of the $j^{th}$ sub-iteration, every server $y$ with $r_y=j$ sends its ID to its neighboring clients. We then repeat the following for $2j-1$ steps: every node (client or server) sends the maximum ID that it receives to its neighbors. It is easy to see that after all sub-iterations every server $v$ receives the maximum ID among the IDs of servers $y$ such that $r_y=j$ and the distance between $y$ and $v$ in the server graph is at most $j$. Since $r_y=O(\log n)$ for every $y$, there are $O(\log n)$ sub-iterations and each sub-iteration takes $O(\log n)$ time. After all sub-iterations, every server $v$ can select $C(v)$. Thus, we can simulate the Linial-Saks algorithm in $O(\log^3 n)$ time. (Simulating Linial-Saks algorithm on the vertex-centric model can be done similarly except that we will have  $j-1$ sub-iterations instead of $2j-1$.)

We now construct $G_1, \ldots, G_k$. At any sub-iteration above, if a node $v$ sends the ID of some node $y$ to its neighbors, we add its neighbors and all edges incident to $v$ to $G_y$ (corresponding to set $S_y$). Clearly, $S_y$ is contained in $V(G_y)$ since $G_y$ contains all nodes that receive the ID of $y$. This process also guarantees that $G_y$ has $O(\log n)$ diameter since every node in $G_y$ can reach $G_y$ in $O(\log n)$ hops by following the path that the ID of $y$ was sent to it. Additionally, since the simulation of the Linial-Saks algorithm finishes in $O(\log^3 n)$ rounds, and in each round we add an edge $(u, v)$ to at most two subgraphs, we have that every edge is in $O(\log^3 n)$ subgraphs.
\end{proof}

\begin{proof}[Proof of \Cref{thm:decomposition congest}]
We decompose the network as in \Cref{claim:decomposition}. Then, we use $\cA$ to compute MIS iteratively in $O(\log n)$ iterations as follows. At the $i^{th}$ iteration, we consider each set $S_t$ and graph $G_t$ of color $i$. We will decide whether each node in $S_t$ will be in the final solution of MIS or not.  We assume that we already did so for sets of colors $1, 2, \ldots, i-1$. 

Let $\cH_t$ be the following sub-hypergraph. $\cH_t$ consists of all hypernodes in $S_t$. For each hyperedge $e$ that contains a node in $S_t$, we add an edge $e'=e \cap S_t$ to $\cH_t$ if $e$ contains {\em none} of the following hypernodes: (1)  a hypernode in set $S'$ of color $j>i$, and (2) a node in set $S''$ of color $j<i$ that is already decided to be {\em not} in the MIS. We can construct $\cH_t$ quickly since each server (hypernode) can decide locally whether each client (hyperedge) adjacent to it satisfies the above property or not. 

Now we compute MIS of $\cH_t$ by simulating $\cA$ to solve the subgraph-MIS problem on $G_t$ where the subgraph we want to solve is the subgraph $G'_t$ of $G_t$ representing $\cH_t$. Note that since $G_t$ has diameter $O(\log n)$, $\cA$ will finish in $T(n)$ time if we simulate $\cA$ on only $G_t$. However, we will actually simulate $\cA$ on {\em all}  graphs $G_{t_1}, G_{t_2}, \ldots$ of color $i$ {\em simultaneously}. Since each edge is contained in $O(\log^3 n)$ such graphs, we can finish simulating $\cA$ on all graphs in $O(T(n)\log^3 n)$ time. 

After we finish simulating $\cA$ on $\cH_t$, we use the solution as a solution of MIS of the original graph $\cH$; that is, we say that a hypernode is in the MIS of $\cH$ if and only if it is in the MIS of $\cH_t$. We now prove the correctness. Let $M_t$ be the MIS of $\cH_t$. First, observe that any hypernode in $M_t$ can be added to the MIS solution of $\cH$ without violating the independent constraint since $\cH_t$ contains all hyperedges of $\cH$ except those that contain some hypernode of higher color (which is not yet added to the MIS of $\cH$) and hypernode of lower color that is already decided not to be in the MIS of $\cH$. Secondly, the fact that any hypernode $v$ in $S_t$ that is not in $M_t$ implies that there is a hyperedge $e'$ in $H_t$ that contains all hypernodes in $H_t$ except $v$. Let $e$ be a hyperedge in $\cH$ such that $e'\subseteq e$. Note that $e$ does not contain any hypernode in other set $S_{t'}$ of the same color as $S_t$. Also observe that every hypernode in $e\setminus S_t$ must be already decided to be in the MIS of $\cH$ (otherwise, we will not have $e'=e\cap S_t$ in $\cH_t$). Thus, every hypernode in $e'$ except $v$ is already in the MIS of $\cH$ as well; in other words, $v$ cannot be in the MIS of $\cH$. This completes the correctness of the algorithm. Thus, after we finish simulating $\cA$ on graphs of all colors, we obtain the MIS of $\cH$. Since we need $O(T(n)\log^3 n)$ time for each color, we need $O(T(n)\log^4 n)$ time in total. 
\end{proof}
}

\begin{lemma} \label{thm:local}
MIS can be solved in $O(\log^2 n)$ rounds in the LOCAL models (both vertex-centric and server-client representations). 
\end{lemma}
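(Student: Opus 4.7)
The plan is to instantiate the low-diameter decomposition machinery of \Cref{thm:decomposition congest} and \Cref{claim:decomposition} in the LOCAL setting, where unlimited bandwidth allows each cluster to solve its subgraph-MIS instance in time proportional to its diameter. The target is to show that subgraph-MIS on an $O(\log n)$-diameter network can be solved in $O(\log n)$ LOCAL rounds; combined with the $O(\log n)$ color classes produced by the Linial--Saks decomposition, this yields the claimed $O(\log^2 n)$ bound (a direct application of \Cref{thm:decomposition congest} with $T(n)=O(\log n)$ would only give $O(\log^5 n)$, so we must bypass the CONGEST-specific scheduling loss).

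First I would argue that the decomposition of \Cref{claim:decomposition} can be executed in $O(\log^2 n)$ LOCAL rounds (the $O(\log^3 n)$ CONGEST version loses a $\log n$ factor only in the broadcast-within-ball step, which collapses to $O(\log n)$ per iteration when message size is unrestricted). This produces sets $S_1,\dots,S_k$ grouped into $O(\log n)$ color classes, where each cluster sits inside an associated subgraph $G_t$ of diameter $O(\log n)$ and no hyperedge of $\cH$ has hypernodes in two distinct same-colored clusters.

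Next I would process the color classes sequentially for $i=1,2,\dots,O(\log n)$. Within a single color class, all clusters are handled in parallel, which is safe because no hyperedge spans two same-colored clusters. Inside a cluster $S_t$, a designated center node (for example, the Linial--Saks ball center of $S_t$) collects the entire restricted sub-hypergraph $\cH_t$ defined exactly as in the proof of \Cref{thm:decomposition congest}: the hyperedges of $\cH$ intersected with $S_t$, discarding those that still contain an undecided hypernode of higher color and those already killed by a previously-selected MIS member of lower color. Because $G_t$ has diameter $O(\log n)$ and LOCAL imposes no congestion, this collection takes $O(\log n)$ rounds. The center then computes an MIS of $\cH_t$ using its unbounded local computation and broadcasts the assignment back to $S_t$ in another $O(\log n)$ rounds.

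The step I expect to require the most care is guaranteeing that each hypernode, just before its cluster's color phase, knows the current global status of every hyperedge containing it---in particular, whether some member has already joined the MIS during an earlier phase. In the server-client model this information propagates from decided servers to their client-hyperedges in one round and then back to undecided servers in another, so $O(1)$ bookkeeping rounds per color class suffice; in the vertex-centric model it is immediately visible to the hypernode. Correctness then follows verbatim from the proof of \Cref{thm:decomposition congest}: any MIS of $\cH_t$ extends the partial global MIS, and any hypernode of $S_t$ left out has a hyperedge whose other members are already in the MIS. Summing the $O(\log^2 n)$ decomposition cost with $O(\log n)$ color-class phases of $O(\log n)$ rounds each yields the $O(\log^2 n)$ bound, with the argument applying identically to both server-client and vertex-centric representations.
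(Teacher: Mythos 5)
Your proposal is correct and follows essentially the same route as the paper: decompose via the Linial--Saks-based clustering, process the $O(\log n)$ color classes sequentially, and within each class solve subgraph-MIS on every $O(\log n)$-diameter cluster in parallel by gathering the sub-hypergraph at a center, computing the MIS locally, and broadcasting back, for $O(\log n)$ rounds per color. Your explicit observations that the decomposition itself drops to $O(\log^2 n)$ rounds in LOCAL and that one must avoid the CONGEST scheduling overhead of the decomposition lemma are points the paper's proof handles only implicitly, but they do not change the argument.
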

\onlyLong{\begin{proof}}
  \onlyShort{\begin{proof}[Proof Sketch]}
  Using \Cref{claim:decomposition}, we partition the hypernodes of the input network into subgraphs each of which have $O(\log n)$ diameter and no two subgraphs assigned the same colour share a hyper edge.
\onlyLong{Our algorithm proceeds in the same way as in the proof of \Cref{thm:decomposition congest}, except that there is no congestion in the LOCAL model when we simulate $\cA$ (as specified in Lemma \ref{thm:decomposition congest}) on all graphs of color $i$.} 
\onlyShort{Note that there is no congestion in the LOCAL model when we simulate $\cA$ (as specified in Lemma \ref{thm:decomposition congest}) on all graphs of color $i$.}
  Thus, we need $O(T(n))$ time per color instead of $O(T(n)\log^3 n)$. Moreover, we can solve the subgraph-MIS problem on a network of $O(\log n)$ diameter in $O(\log n)$ time by collecting the information about the subgraph to one node, locally compute the MIS on such node, and send the solution back to every node. Thus, $T(n)=O(\log n)$. 
%
%
It follows that we can solve MIS on networks of any diameter in $O(\log^2 n)$ time. 
\end{proof}

\onlyShort{\vspace{-0.35cm}}
\subsection{$O(\log^{(d+4)!+4} n)$ time in the CONGEST model assuming constant dimension $d$}\label{sec:constant dimension}
\onlyShort{\vspace{-0.2cm}}

Let $(\cH, \cH')$ be an instance of the subgraph-MIS problem such that the network $G$ representing $\cH$ has $O(\log n)$ diameter. We now show that we can solve this problem in $O(\log^{(d+4)!} n)$ time when  $\cH'$ has a constant dimension $d$, i.e. $|e|\leq d$ for every hyperedge $e$ in $\cH'$. By \Cref{thm:decomposition congest}, we will get a $O(\log^{(d+4)!+4} n)$-time algorithm for the MIS problem in the case of constant-dimensional hypergraphs (of any diameter) which works in both vertex-centric and server-client representations and even in the CONGEST model. This algorithm is also an important building block for the algorithm in the next section. 


Our algorithm simulates the PRAM algorithm of Beame and Luby \cite{BeameL90} which was proved by Kelsen \cite{Kelsen92} to finish in $O(\log^{(d+4)!} n)$ time when the input hypergraph has a constant dimension $d$ and this running time was recently extended to any $d\leq \frac{\log\log n}{4\log\log\log n}$ by Bercea~et~al.~\cite{aravind2}\footnote{The original running time of Kelsen \cite{Kelsen92} is in fact $O((\log n)^{f(d)})$ where $f(d)$ is defined as $f(2)=7$ and $f(i)=(i-1)\sum_{j=2}^{i-1}f(j)+7$ for $i>2$. The $O(\log^{(d+4)!} n)$ time (which is essentially the same as Kelsen's time) was shown in \cite{aravind2}. We will use the latter running time for simplicity. Also note that the result in this section holds for all $d\leq \frac{\log\log n}{4\log\log\log n}$ due to \cite{aravind2}.}. 
The crucial part in the simulation is to compute a number $\zeta(\cH')$ defined as follows. For $\emptyset \neq x \subseteq V(\cH')$ and an integer $j$  with $1\leq j\leq d-|x|$ we define: 
$N_j(x, \cH') = \{y\subseteq V(\cH') \mid x\cup y \in E(\cH') \wedge x\cap y = \emptyset\wedge |y|=j\},$
and
$d_j(x, \cH') = (|N_j(x, \cH')|)^{1/j}.$
Also, for $2\leq i\leq d$, let\footnote{A note on the notation: \cite{BeameL90,Kelsen92} use $\Delta$ to denote what we use $\zeta$ to denote here. We use a different notation since we use $\Delta$ for another purpose.} 
$\zeta_i(\cH') = \max \{d_{i-|x|}(x, \cH') \mid x\subseteq V(\cH') \wedge 0<|x|<i\}$
and
$\zeta(\cH') = \max\{\zeta_i(\cH') \mid 2\leq i\leq d\}.$
We now explain how to compute $\zeta(\cH')$ in $O(\log^{(d+4)!} n)$ time. First, note that we can assume that every node knows the list of members in each hyperedge that contains it: this information is already available in the vertex-centric representation; and in the server-client representation, every hyperedge can send this list to all nodes that it contains in $O(d)$ time in the CONGEST model. Every node $v$ can now compute, for every $i$, 
$\zeta_i(v, \cH') = \max \{d_{i-|x|}(x, \cH') \mid x\subseteq V(\cH') \wedge 0<|x|<i \wedge v\in x\}.$
This does not require any communication since for any $x$ such that $v\in x$, node $v$ already knows all hyperedges that contain $x$ (they must be hyperedges that contain $v$). Now, we compute $\zeta(\cH') = \max\{\zeta_i(v, \cH') \mid 2\leq i\leq d \wedge v\in V(\cH')\}$ by computing through the breadth-first search tree of the network representing $\cH$ (this is where we need the fact that the network has $O(\log n)$ diameter). 

Once we get $\zeta(\cH')$, the rest of the simulation is trivial; we refer to the full paper for details.
\onlyLong{
We provide some detail here for completeness. We mark each hypernode in $\cH'$ with probability $p=\frac{1}{2^{d+1}\zeta(\cH')}$. If a hyperedge has all of its nodes
marked, unmark all of its nodes. Remove the hypernodes that are still marked from $\cH'$ and add them to the independent set. We also remove these hypernodes from $\cH'$, thus reducing the size of some hyperedges in $\cH'$. In the remaining hypergraph do the following: eliminate any edges properly containing another edge; remove any hypernodes that form a 1-dimension edge (i.e. remove every hypernode $v$ such that there is a hyperedge $\{v\}$); finally, remove isolated vertices (i.e., those not contained in any edge) and add them to the independent set. Let $\cH'$ be the resulting hypergraph. Repeat this procedure until there is no hypernodes left. It is easy to see that all steps (before we repeat the procedure) takes $O(1)$ rounds.
Kelsen \cite{aravind2} and Bercea~et~al.~\cite{aravind2} showed that we have to repeat this procedure only $O(\log^{(d+4)!} n)$ time (in expectation and with high probability) when $d\leq \frac{\log\log n}{4\log\log\log n}$ (there is no guarantee for any other values of $d$); so, our simulation finishes in $O(\log^{(d+4)!} n)$ rounds. 
}

\onlyShort{\vspace{-0.2cm}}
\subsection{$\Delta^{\epsilon}\log^{(1/\epsilon)^{O(1/\epsilon)}} n$ and $\Delta^{o(1)}n^{o(1)}$ Time in the CONGEST model}\label{sec:MIS Delta epsilon}
\onlyShort{\vspace{-0.2cm}}

%
%

We rely on a modification of Tur\'an's theorem, which states that a (two-dimensional) graph of {\em low} average degree has a {\em large} independent set (see e.g. Alon and Spencer \cite{AlonS08book}). We show that this theorem also holds for high-dimensional hypergraphs, and show further that such a large independent set can be found w.h.p when the network diameter is $O(\log n)$. 

\begin{lemma}[A simple extension of Tur\'an's theorem]\label{thm:Turan} Let $d\geq 2$ and $\delta\geq 2$ be any integers. Let $\cH$ be any hypergraph such that every hyperedge in $\cH$ has dimension at least $d$, there are $n$ hypernodes, and the average hypernode degree is $\delta$. (Note that the diameter of the network representing $\cH$ can be arbitrary.) If every node knows $\delta$ and $d$, then we can find an independent set $M$ whose size in expectation is at least  $\frac{n}{\delta^{1/(d-1)}}(1-\frac{1}{d})$ in $O(1)$ time. 
%
\end{lemma}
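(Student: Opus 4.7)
The plan is to use the classical probabilistic alteration (mark-and-delete) method from the proof of Tur\'an's theorem, tuned to hypergraph dimension $d$, and argue that both phases can be carried out in $O(1)$ rounds in either model. Set $p := \delta^{-1/(d-1)}$; this is a constant every node can compute locally from the known values $\delta$ and $d$. Each hypernode $v$ independently marks itself with probability $p$, with no communication. Let $X$ denote the number of marked nodes and $Y$ the number of \emph{bad} hyperedges, i.e.\ hyperedges all of whose vertices are marked. Then $\mathbb{E}[X] = np$, and since every $e$ satisfies $|e| \geq d$ we have $\mathbb{E}[Y] \leq m p^{d}$, where $m$ is the number of hyperedges. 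The incidence identity $\sum_{v} \deg(v) = \sum_{e} |e|$ together with $|e|\geq d$ and $\sum_{v}\deg(v)=n\delta$ gives $m \leq n\delta/d$.

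Next I would perform one local cleanup: within each bad hyperedge, the marked node of smallest id unmarks itself. In the server-client model, every server sends its id and its mark to all incident clients in one round; each client locally checks whether its hyperedge is bad and, if so, identifies the minimum-id marked server and returns a one-bit ``unmark'' message in the next round. In the vertex-centric model, the nodes of each hyperedge are mutually adjacent in the server graph, so a single round of exchanging (id, mark) pairs along these edges lets every node determine, for each hyperedge it belongs to, whether it is the minimum-id vertex of a bad hyperedge; it then unmarks itself if so. Each message carries only an id and a bit, so CONGEST bandwidth is respected. Let $M$ be the set of nodes still marked; by construction, no hyperedge is fully contained in $M$, so $M$ is an independent set of $\cH$. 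Since each bad hyperedge causes at most one unmarking, $|M| \geq X - Y$ pointwise.

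Taking expectations and plugging in $p = \delta^{-1/(d-1)}$ yields
\[
\mathbb{E}[|M|] \;\geq\; np - \frac{n\delta}{d}\,p^{d}
\;=\; np\left(1 - \frac{\delta\, p^{d-1}}{d}\right)
\;=\; \frac{n}{\delta^{1/(d-1)}}\left(1 - \frac{1}{d}\right),
\]
which is the claimed bound. The only point requiring any care is verifying that the cleanup really is $O(1)$ rounds under CONGEST despite a single node potentially lying in many hyperedges: this is fine because the relevant messages are sent along individual communication edges and each is of $O(\log n)$ bits. There is no iterative density-decrease argument to run here, and no global coordination is needed beyond the local knowledge of $\delta$ and $d$, so the diameter of the network representing $\cH$ indeed plays no role, matching the lemma's assertion.
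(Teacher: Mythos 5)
Your proposal is correct and follows essentially the same argument as the paper: the alteration method with $p=\delta^{-1/(d-1)}$, the bound $\mathbb{E}[Y]\leq \frac{n\delta}{d}p^{d}$ via $m\leq n\delta/d$, and a one-round cleanup deleting one marked vertex per fully-marked hyperedge. You are somewhat more explicit than the paper about the $O(1)$-round CONGEST implementation of the cleanup, and you correctly invoke $|e|\geq d$ where the paper's text contains a small slip (``at most $d$''), but the proof is the same.
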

\onlyLong{
\begin{proof}
We modify the proof of Theorem 3.2.1 in \cite[pp.29]{AlonS08book}. Let $p=(1/\delta)^{1/(d-1)}$ (note that $p<1$) and $S$ be a random set of hypernodes in $\cH$ defined by $Pr[v\in S]=p$ for every hypernode $v$. Let $X=|S|$, and let $Y$ be the number of hyperedges in $\cH$ contained in $S$ (i.e. hyperedge $e\in E(\cH)$ such that $e\subseteq S$). For each hyperedge $e$, let $Y_e$ be the indicator random variable for the event $e\subseteq S$; so, $Y=\sum_{e\in E(\cH)} Y_e$. Observe that for any hyperedge $e$, 
$E[Y_e] = p^{|e|} \leq p^d$ 
since $e$ contains at most $d$ hypernodes. So, $E[Y] = \sum_{e\in E(\cH)} E[Y_e] \leq \frac{n\delta}{d}p^d$ (the inequality is because the number of hyperedges in $\cH$ is at most  $\frac{n\delta}{d}$). 
Clearly, $E[X]=np$; so, 
$$E[X-Y] \geq np-\frac{n\delta}{d}p^d = n p (1-\frac{\delta}{d}p^{d-1}) = n(\frac{1}{\delta})^{\frac{1}{d-1}}(1-1/d)$$
where the last equality is because $p=(\frac{1}{\delta})^{\frac{1}{d-1}}$. 
Our algorithm will pick such a random set $S$. (Every node can decide whether it will be in $S$ locally.) Then it selects one vertex from each edge of $S$ and deletes it. (This can be done in $O(1)$ time.) This leaves a set $S^*$ with at least $n(\frac{1}{\delta})^{\frac{1}{d-1}}(1-\frac{1}{d})$ hypernodes in expectation. All edges having been destroyed, $S^*$ is an independent set.
%
%
%
\end{proof}
}

\paragraph{Algorithm.} We use the following algorithm to solve the subgraph-MIS problem on a sub-hypergraph $\cH'$ of $\cH$, assuming that the network representing $\cH$ has $O(\log n)$ diameter. Let $n'=|V(\cH')|$. Let $d$ be an arbitrarily large constant. Let $\cH'_d$ be the sub-hypergraph of $\cH'$ where $V(\cH'_d)=V(\cH')$ and we only keep hyperedges of dimension (i.e. size) at least $d$ in $\cH'_d$. (It is possible that $\cH'_d$ contains no edge.) We then find an independent set of expected size at least $\frac{n'}{\Delta^{1/(d-1)}}(1-1/d)$ in $\cH'_d$, denoted by $S$; this can be done in $O(1)$ time by \Cref{thm:Turan} (note that we use the fact that $\delta\leq \Delta$ here). Let $\cH'_S$ be the sub-hypergraph of $\cH'$ induced by nodes in $S$. 
Note that $\cH'_S$ does not contain any hyperedge in $\cH'_d$ and thus has dimension at most $d$, which is a constant. So, we can run the $O(\log^{(d+4)!} n)$-time algorithm from \Cref{sec:constant dimension} to find an MIS of $\cH'_S$. We let $M'_S$ be such a MIS of $\cH'_S$. 

Our intention is to use $M'_S$ as part of some MIS $M'$ of $\cH'$. Of course, any hypernode $v$ in $V(\cH'_S)\setminus M'_S$ cannot be in such $M'$ since  $M'\cup \{v\}$ will contain some hyperedge $e$ in $\cH'_S$ which is also a hyperedge in $\cH'$.  
It is thus left to find which hypernodes in $V(H')\setminus S$ should be added to $M'_S$ to construct an MIS $M'$ of $\cH'$. To do this, we use the following hypergraph. Let $\cH''$ be the sub-hypergraph of $\cH'$ such that $V(\cH'')=V(\cH')\setminus S$ and for every hyperedge $e\in E(\cH')$, we add a hyperedge $e\cap V(\cH'')$ to $\cH''$ if and only if $e \subseteq M'_S \cup V(\cH'')$; in other words, we keep edge $e$ that would be ``violated'' if we add every hypernode in $\cH''$ to $M'$. 
We now find an MIS $M''$ of $\cH''$ by recursively running the same algorithm with $\cH''$, instead of $\cH'$, as a subgraph of $\cH$. The correctness follows from the following claim\onlyShort{ (see the full paper for the proof).}\onlyLong{.}



\begin{claim}
$M'=M'_S\cup M''$ is a MIS of $\cH'$. 
\end{claim}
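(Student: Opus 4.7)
The plan is to verify the two defining properties of an MIS separately: (i) independence (no hyperedge of $\cH'$ is contained in $M'$), and (ii) maximality (every vertex of $V(\cH') \setminus M'$ is blocked by some hyperedge of $\cH'$ lying in $M' \cup \{v\}$). The key structural observation that drives both parts is that the construction partitions $V(\cH')$ as $S \sqcup V(\cH'')$, with $M'_S \subseteq S$ and $M'' \subseteq V(\cH'')$, so for any hyperedge $e \in E(\cH')$ we may split $e = (e \cap S) \sqcup (e \cap V(\cH''))$.

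For independence, I would argue by contradiction: suppose $e \in E(\cH')$ with $e \subseteq M'$. If $e \subseteq S$, then $e$ is a hyperedge of the induced subhypergraph $\cH'_S$ and is contained in $M'_S$, contradicting that $M'_S$ is an MIS of $\cH'_S$. Otherwise $e \cap V(\cH'') \neq \emptyset$; since $e \cap S \subseteq M'_S$ and $e \setminus S \subseteq M'' \subseteq V(\cH'')$, we have $e \subseteq M'_S \cup V(\cH'')$, which is precisely the trigger in the definition of $\cH''$. Hence $e \cap V(\cH'')$ is a hyperedge of $\cH''$, and it is contained in $M''$, contradicting independence of $M''$.

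For maximality, take any $v \in V(\cH') \setminus M'$. If $v \in S$, then $v \notin M'_S$, so by maximality of $M'_S$ in $\cH'_S$ there is a hyperedge $e' \in E(\cH'_S) \subseteq E(\cH')$ with $e' \subseteq M'_S \cup \{v\} \subseteq M' \cup \{v\}$. If $v \in V(\cH'')$, then by maximality of $M''$ in $\cH''$ there is $e'' \in E(\cH'')$ with $e'' \subseteq M'' \cup \{v\}$. By construction, $e''$ arises from some $e \in E(\cH')$ with $e \subseteq M'_S \cup V(\cH'')$ and $e'' = e \cap V(\cH'')$; then $e \cap S \subseteq M'_S$ and $e \cap V(\cH'') \subseteq M'' \cup \{v\}$, so $e \subseteq M' \cup \{v\}$, as required.

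The proof is essentially bookkeeping once one recognizes that the definition of $\cH''$ was tailored exactly to make both directions go through: it records precisely those hyperedges of $\cH'$ whose fate outside $S$ is not already decided by the fact that all vertices of $M'_S$ have been committed to the independent set. No step looks like a real obstacle; the only subtlety to be careful about is checking the two endpoints of the case split for independence (fully inside $S$ versus straddling $S$ and $V(\cH'')$), and confirming in the maximality argument that the hyperedge $e$ witnessing $e'' \in E(\cH'')$ really does lie in $M' \cup \{v\}$ and not merely in $M'_S \cup M'' \cup \{v\}$ together with some stray vertex outside $M'$—which it does, by the defining inclusion $e \subseteq M'_S \cup V(\cH'')$ combined with $e \cap V(\cH'') \subseteq M'' \cup \{v\}$.
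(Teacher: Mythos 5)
Your proof is correct and follows essentially the same route as the paper's: a case split between $S$ and $V(\cH'')$, using the defining condition $e \subseteq M'_S \cup V(\cH'')$ of $\cH''$ to transfer hyperedges between $\cH'$ and $\cH''$ for both independence and maximality. The only differences are cosmetic — you argue maximality directly rather than by contradiction, and you explicitly handle the degenerate case $e \subseteq S$ in the independence part, which the paper's write-up glosses over.
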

\onlyLong{
\begin{proof} First, we show that $M'$ is an independent set of $\cH'$. Assume for a contradiction that there is a hyperedge $e$ in $\cH'$ such that $e\subseteq M'$. This means that $e \subseteq M'_S \cup V(\cH'')$ since $M'_S\cup M''\subseteq M'_S \cup V(\cH'')$. It follows from the construction of $\cH''$ that  there is an edge $e'=e\cap V(\cH'')$ in $\cH''$. Note that $e\cap V(\cH'')\subseteq M''$; in other words $e'\subseteq M''$.  This, however, contradicts the fact that $M''$ is an MIS in $\cH''$. 

Now we show that $M'$ is maximal. Assume for a contradiction that there is a hypernode $v$ in $V(\cH')\setminus M'$ such that $M'\cup \{v\}$ is an independent set. If $v$ is in $S$, then $M'_S\cup \{v\}$ is an independent set in $\cH'_S$ (since it is a subset of $M'\cup \{v\}$), contradicting the fact that $M'_S$ is an MIS in $\cH'_S$. So, $v$ must be in $V(\cH'')$. This, however, implies that $M''\cup \{v\}$ is an independent set in $\cH''$ (again, since it is a subset of $M'\cup \{v\}$), contradicting the fact that $M''$ is an MIS in $\cH''$.
\end{proof}
}
We now analyze the running time of this algorithm. Recall that $E[|S|]\geq \frac{n'}{\delta^{(1/(d-1))}}(1-1/d)$. In other words, the expected value of $|V(\cH'')|\leq (1- \frac{c(d)}{\Delta^{1/(d-1)}}) |V(\cH')|$ where $c(d)=\frac{1}{2}(1-1/d)$ is a constant which is strictly less than one (recall that $d$ is a constant). It follows  that the expected number of recursion calls is $O(\Delta^{\frac{1}{d-1}})$. 
%
Since we need $O(\log^{(d+4)!} n)$ time to compute $M'_S$ and to construct $\cH''$, the total running time is  $O(\Delta^{\frac{1}{d-1}}\log^{(d+4)!} n)$. By \Cref{thm:decomposition congest}, we can compute MIS on any hypergraph $\cH$ (of any diameter) in 
\longOnly{$$O(\Delta^{\frac{1}{d-1}}\log^{(d+4)!+4} n)$$}
\shortOnly{$O(\Delta^{\frac{1}{d-1}}\log^{(d+4)!+4} n)$}
time. 
For any constant $\epsilon>0$, we set $d=1+1/\epsilon$ to get the claimed running time of \longOnly{
\begin{align}
O(\Delta^{\epsilon}\log^{(5+1/\epsilon)!+4} n) = \Delta^{\epsilon}\log^{(1/\epsilon)^{O(1/\epsilon)}} n.
\label{eq:Delta epsilon time}
\end{align}}
\shortOnly{$O(\Delta^{\epsilon}\log^{(5+1/\epsilon)!+4} n) = \Delta^{\epsilon}\log^{(1/\epsilon)^{O(1/\epsilon)}} n.$}
%
%
%
Moreover, by the recent result of Bercea~et~al. \cite{aravind2}, we can in fact set $d$ as large as $\frac{\log\log n}{4\log\log\log n}.$  
%
%
\longOnly{
In this case, note that for some constant $c'$, 
$$(d+4)! = d^{c'd} = e^{c'd\log d}= e^{c'\cdot\frac{\log\log n}{c\log\log\log n}\cdot \log\log\log n} = \log^{1/10} n$$ 
where the last equality holds when we set $c=10c'$.
Thus, 
$$\log^{(d+4)!} n = \log^{\log^{1/10} n} n  = 2^{(\log^{1/10} n)\log\log n}=n^{o(1)}.$$
The running time thus becomes $\Delta^{o(1)}n^{o(1)}.$
}
\shortOnly{If we set $d=\frac{\log\log n}{c\log\log\log n}$ for some large enough constant $c$, the term $\log^{(d+4)!} n$ can be bounded by $n^{o(1)}$ and thus the running time becomes $\Delta^{o(1)}n^{o(1)}$.}

\onlyShort{We obtain the $O(\sqrt{n})$ time by modifying the PRAM algorithm of Karp, Upfal, and Wigderson \cite[Section 4.1]{KarpUW88}. This algorithm can be found in the full version.}
\onlyLong{
\subsection{$O(\sqrt{n})$ Time in the CONGEST model}\label{sec:sqrt n algo}

We obtain the $O(\sqrt{n})$ time by modifying the PRAM algorithm of Karp, Upfal, and Wigderson \cite[Section 4.1]{KarpUW88}. (Note that we do not need the fact that the network diameter is $O(\log n)$ for this algorithm.) Their algorithm is as follows. Let $v_1, v_2, \ldots, v_n$ be a random permutation of hypernodes. The algorithm gradually adds a hypernode to the independent set one by one, starting from $v_1$. It stops at some hypernode $v_k$ when $v_k$ cannot be added to the independent set. Thus, $v_1, \ldots v_{k-1}$ are added to the independent set; the algorithm removes these hypernodes from the graph. It also removes {\em all} hypernodes that cannot be added to the independent set (i.e. any $v$ such that $\{v_1, \ldots, v_{k-1}, v\}$ contains some hyperedge) and all hyperedges that contain them. It repeats the same process to find a MIS of the remaining graph. It is easy to show (see \cite{KarpUW88} for detail) that the union of a MIS of the remaining graph and $\{v_1, \ldots, v_{k-1}\}$ is a MIS or the input graph. The key to proving the efficiency of this algorithm is the following. 

\begin{claim}[\cite{KarpUW88}]\label{claim:Karp et al}
The expected number of removed hypernodes ($v_1, \ldots, v_{k-1}$ and hypernodes that cannot be added to the independent set) in the above process is $\Omega(\sqrt{n})$. 
\end{claim}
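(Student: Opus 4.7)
Let $N$ denote the number of hypernodes at the start of the iteration, and write the number of removed hypernodes as $T = (k-1) + B_{k-1}$, where $B_j$ counts the still-unprocessed hypernodes that are blocked by $\{v_1,\ldots,v_j\}$ (so $B_{k-1}$ is exactly the number of nodes removed in the ``cleanup'' phase, including $v_k$). The plan is a two-case dichotomy based on the expected stopping time $E[k-1]$, combined with a hazard-rate identity for the stopping distribution.

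If $E[k-1] \geq \sqrt{N}/2$, then $E[T] \geq E[k-1] = \Omega(\sqrt{N})$ immediately. Otherwise, Markov's inequality gives $\Pr[k \leq \sqrt{N}] > 1/2$, and I would focus on lower-bounding $E[B_{k-1}]$. The key probabilistic observation is that, conditional on surviving to step $j$ with a given prefix $(v_1,\ldots,v_j)$, the next hypernode $v_{j+1}$ is uniformly distributed over the $N-j$ unprocessed hypernodes, of which exactly $B_j$ are blocked. Hence the conditional hazard rate is $\Pr[k = j+1 \mid \cF_j,\, k > j] = B_j/(N-j)$, where $\cF_j$ denotes the $\sigma$-algebra generated by the first $j$ draws. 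Combining this with the monotonicity $B_j \leq B_{k-1}$ on $\{k > j\}$ (enlarging the partial IS can only create more blocked outside nodes), one derives
\[
\Pr[k \leq \sqrt{N}] \;=\; \sum_{j=0}^{\sqrt{N}-1} E\!\left[\frac{B_j}{N-j}\mathbb{1}_{\{k > j\}}\right] \;\leq\; \frac{1}{N-\sqrt{N}}\sum_{j=0}^{\sqrt{N}-1} E\bigl[B_{k-1}\mathbb{1}_{\{k > j\}}\bigr] \;\leq\; \frac{\sqrt{N}\, E[B_{k-1}]}{N-\sqrt{N}}.
\]
Rearranging and using $\Pr[k \leq \sqrt{N}] > 1/2$ gives $E[B_{k-1}] \geq (N-\sqrt{N})/(2\sqrt{N}) = \Omega(\sqrt{N})$, hence $E[T] \geq E[B_{k-1}] = \Omega(\sqrt{N})$ in this case as well.

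The main obstacle is establishing the hazard-rate identity cleanly: it requires careful conditioning on the filtration generated by the revealed prefix and an appeal to the exchangeability of the uniformly random permutation to argue that $v_{j+1}$ is uniform over the surviving hypernodes. Once that identity is in place, the monotonicity $B_j \leq B_{k-1}$ on $\{k > j\}$ reduces everything to a single union-bound-style chain of inequalities, and the two-case dichotomy completes the proof. A possible secondary subtlety is that $\sqrt{N}$ is non-integer; replacing it by $\lfloor\sqrt{N}\rfloor$ throughout changes none of the bounds asymptotically.
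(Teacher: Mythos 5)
The paper never proves this claim --- it is imported from Karp--Upfal--Wigderson with a pointer to that reference --- so there is no in-paper argument to compare against; I can only judge your proof on its own terms, and it is correct. The dichotomy on $E[k-1]$, the hazard-rate identity $\Pr[k=j+1\mid \mathcal{F}_j,\,k>j]=B_j/(N-j)$ (which is exactly right: conditioned on the revealed prefix, $v_{j+1}$ is uniform over the $N-j$ survivors by exchangeability, and the process stops precisely when it lands on one of the $B_j$ blocked nodes), and the resulting bound $\Pr[k\le\sqrt N]\le \sqrt N\,E[B_{k-1}]/(N-\sqrt N)$ combine to give $E[T]\ge\max\{E[k-1],E[B_{k-1}]\}=\Omega(\sqrt N)$ in both cases. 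Two details deserve one extra line each when you write this out. First, the monotonicity $B_j\le B_{k-1}$ on $\{k>j\}$ is not only about enlarging the partial independent set: $B_j$ counts nodes that are \emph{unprocessed at time $j$} and blocked, so you must also observe that none of them can occur among $v_{j+1},\dots,v_{k-1}$ (if one did, the process would have stopped at that index, contradicting its definition as the first rejection), hence each of them is still unprocessed at time $k-1$ and remains blocked by the larger prefix. Second, cover the degenerate run in which no vertex is ever rejected, e.g.\ by setting $k-1=N$ and $B_{k-1}=0$ there, so that $T=N$ and both branches of the dichotomy remain valid. With those two sentences added, the argument is a complete, self-contained substitute for the cited lemma.
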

It follows almost immediately that we have to repeat the process only $O(\sqrt{n})$ times in expectation (see \cite[Appendix]{KarpUW88} for detail).
We now show how to modify this algorithm to our setting. Every hypernode $v$ picks a random integer $r(v)$ between $1$ and $n^2$. It can be guaranteed that hypernodes pick different numbers with high probability. Then every hypernode $v$ marks itself to the independent set if for any hyperedge $e$ that contains $v$, $r(v)<\max_{u\in e} r(u)$, i.e., its number is not the maximum in any hyperedge. We add all marked hypernodes to the independent set, remove them from the graph, and eliminate hypernodes that cannot be added to the independent set (i.e. a hypernode $v$ marks itself as ``eliminated'' if there is a hyperedge $e$ such that $e\setminus \{v\}$ is a subset of marked hypernodes). We then repeat this process until there is no hypernode left. 

Using \Cref{claim:Karp et al}, we show that our algorithm has to repeat only $O(\sqrt{n})$ times, as follows. Consider an ordering $v_1, \ldots, v_n$ where $r(v_i)<r(v_{i+1})$. This is a random permutation. Let $k$ be such that $v_1, \ldots, v_k$ are added to the independent set by Karp et al.'s algorithm and $v_{k+1}, \ldots, v_n$ are not. Observe that for every $1\leq i\leq k$ and every hyperedge $e$ that contains $v_i$,  $r(v_i)<\max_{u\in e} r(u)$ (otherwise edge $e$ will be violated when we add $v_1, \ldots, v_k$ to the independent set). In other words, our algorithm will also add $v_1, \ldots, v_k$ to the independent set (but it may add other hypernodes as well). It follows  that our algorithm will eliminate every hypernode that is eliminated by Karp et al.'s algorithm. In other words, the set of hypernodes removed by our algorithm is a superset of the set of hypernodes removed by Karp et al.'s algorithm. Thus, by \Cref{claim:Karp et al}, the expected number of hypernodes removed in each iteration of our algorithm is $\Omega(\sqrt{n})$. By the same analysis as Karp et al., our algorithm will need only $O(\sqrt{n})$ iterations in expectation. Each iteration can be easily implemented in $O(1)$ rounds, so our algorithm takes $O(\sqrt{n})$ time in expectation. 

}
\onlyShort{\vspace{-0.2cm}}
\section{Applications of Hypergraph MIS algorithms to standard graph problems}
\onlyShort{\vspace{-0.2cm}}
\label{sec:applications}
In this section we show that our distributed hypergraph algorithms have direct applications in the standard graph setting.
\onlyLong{As a first application of our MIS algorithm, we show how to solve the restricted minimal dominated set (\rmds) problem in \Cref{sec:rmds}.
We will use this \rmds-algorithm to obtain a distributed algorithm for solving the balanced minimal dominating set (\bmds) problem, thereby resolving an open problem of \cite{balanced-minimal}.}

\onlyShort{\vspace{-0.1cm}}
\onlyLong{\subsection{Restricted Minimal Dominating Set (\rmds)} \label{sec:rmds}}
\onlyShort{\paragraph{Restricted Minimal Dominating Set (\rmds)} \label{sec:rmds}}

We are given a (standard) graph $G = (V,E)$ and a subset of nodes $R \subseteq V$, such
that  $R$ forms a dominating set in $G$ (i.e., every node $v \in V$ is either adjacent to $R$ or belongs to $R$).
We are required to find a {\em minimal} dominating set that is a subset of $R$ and dominates $V$.
Since a minimal vertex cover is the complement of a maximal independent set, we can leverage our \mis algorithm (cf.\ \Cref{sec:hyper}).
To this end, we show that the \rmds problem can be solved by finding a minimal hitting set (or minimal vertex cover) on a specific hypergraph $H$.
The server client representation of $H$ is determined by $G$ and $R$ as follows:
For every vertex in $V$ we add a client (i.e.\ hyperedge) and, for every vertex in $R$, we also add a server. 
Thus, for every vertex $u \in V$, we have a client $e_u$ and, if $u \in R$, we also have a server $s_u$.
We then connect a server $s_u$ to a client $e_v$, iff either $u$ and $v$ are adjacent in $G$, or $u=v$.
\onlyLong{\Cref{algo:rmds} contains the complete pseudo code of this construction.}
\onlyShort{See the full paper for the complete pseudo code of this construction.}
Note that we can simulate this server client network on the given graph with constant overhead in the CONGEST model.
We have the following result by virtue of \Cref{thm:mis}:

\begin{theorem} \label{thm:rmds}
  \rmds can be solved in expected time $\tilde O(\min\{\Delta^{\eps}, n^{o(1)}\})$  (for any const.\ $\eps > 0$) on graph $G$ in the CONGEST model and in time $O(\log^2 n)$ in the LOCAL model where $\Delta$ is the maximum degree of $G$.
\end{theorem}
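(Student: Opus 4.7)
The plan is to reduce \rmds on $G$ to the hypergraph \mis problem of Theorem~\ref{thm:mis}, using the server-client hypergraph $H$ described just before the statement. Concretely, the output will be $R \setminus M$, where $M$ is an \mis of $H$ restricted to its server set (which is in bijection with $R$). This works because a minimal hitting set of a hypergraph is exactly the complement, within the vertex set, of a maximal independent set.

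The first step is to verify correctness of the reduction. For $u \in R$ the server $s_u$ is incident to client $e_v$ iff $u$ dominates $v$ in $G$ (by construction, when $u = v$ or $\{u,v\} \in E$). Hence a subset $X \subseteq R$ hits every hyperedge $e_v$ of $H$ iff every $v \in V$ has a dominator in $X$, i.e.\ $X$ is a dominating set of $V$ contained in $R$. Minimality transfers identically: $u \in X$ is redundant as a hitting-set element iff it is redundant as a dominator, since the witness is the same vertex $v$ in both cases. Therefore $X$ is an \rmds iff $\{s_u : u \in X\}$ is a minimal hitting set of $H$, iff its complement in the server set (i.e.\ $R \setminus X$) is an \mis of $H$.

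The second step is to argue that $H$ can be simulated over $G$ with only $O(1)$ overhead in CONGEST. Each vertex $v \in V$ locally hosts the client $e_v$ and, if $v \in R$, also the server $s_v$. Every server-client edge of $H$ is either a self-incidence (when $u = v$, handled with no communication) or corresponds to an actual edge $\{u,v\} \in E$; thus one round of $H$ costs $O(1)$ rounds of $G$ with $O(\log n)$-bit messages (a constant number of ``server-role'' and ``client-role'' messages per edge per round). In the LOCAL model there is likewise no overhead.

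The third step is to invoke Theorem~\ref{thm:mis}. The hypergraph $H$ has $|V| = n$ clients and $|R| \le n$ servers, and the maximum server-degree (which controls $\Delta$ in the bound of Theorem~\ref{thm:mis}) is at most $\Delta_G + 1$, where $\Delta_G$ is the maximum degree of $G$. Running the \mis algorithm of Theorem~\ref{thm:mis} on $H$ and outputting the complement within $R$ therefore yields an \rmds in expected time $O(\log^2 n)$ in the LOCAL model and $\tilde O(\min\{\Delta^\eps, n^{o(1)}\})$ in the CONGEST model, as claimed. I do not anticipate a substantive obstacle; the only mild care needed is the local bookkeeping that lets each vertex of $G$ simulate both its client and (when applicable) its server within each simulated round, which is routine.
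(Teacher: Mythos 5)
Your proposal is correct and follows essentially the same route as the paper: reduce \rmds to minimal hitting set on the server-client hypergraph $H$ built from $G$ and $R$, take the complement of an \mis within the server set, and invoke Theorem~\ref{thm:mis} after noting that $H$ can be simulated on $G$ with constant overhead. The paper leaves the correctness and simulation details implicit (giving only the construction and pseudocode), and your write-up simply makes those routine steps explicit.
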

\onlyLong{
\begin{algorithm}[t]
  \begin{algorithmic}[1]
\item[] Let $R$ be the set of restricted nodes (which are part of the MDS).
\item[] Simulate a server client network $H$. Every node (locally) adds vertices to the clients $C$ resp.\ servers $S$, and simulates the edges in $H$.
\FOR{every node $u$}
  \STATE Node $u$ adds a client $e_u$ to $C$.
  \IF{$u \in R$}
  \STATE Node $u$ adds a server $s_u$ to $S$, and an edge $(s_u,e_u)$ to $E(H)$.
  \ENDIF
\ENDFOR
\FOR{all nodes $u$, $v$ where $(u,v) \in E(G)$}
  \STATE If server $s_u$ exists in $H$, add edge $(s_u,e_u)$ to $H$.
\ENDFOR
\item[]
\STATE Find an MIS on $H$ and let $O_{MIS} \subseteq S$ be the servers that are in the output set.
\FOR{every node $u$ where $s_u$ exists}
  \STATE If $s_u \notin O_{MIS}$, then node $u$ adds itself to the \rmds.
\ENDFOR
\end{algorithmic} 
  \caption{An \rmds-algorithm: Finding a minimal dominating set on a graph $G$ that is a subset of a given dominating set $R$.}
  \label{algo:rmds}
\end{algorithm}
}

\onlyShort{\vspace{-0.2cm}}
\onlyLong{\subsection{Balanced Minimal Dominating Set} \label{sec:bmds}}
\onlyShort{\paragraph{Balanced Minimal Dominating Set} \label{sec:bmds}}
\onlyShort{\vspace{-0.2cm}}
We define the \emph{average degree} of a (standard) graph $G$,
 denoted by $\delta$, as the total degrees of its vertices (degree of a vertex is its degree in $G$)  divided by the number of vertices in $G$.
A {\em balanced minimal dominating set (BMDS)} (cf.\ \cite{balanced-minimal})
is a minimal dominating set  $D$ in $G$ that minimizes the ratio of
the average degree of $D$ to that of the graph itself (the average degree of the set of nodes $D$ is defined as the average degree of the subgraph induced by $D$).
\onlyLong{The \bmds problem is motivated by applications in fault-tolerance and load balancing (see \cite{balanced-minimal} and the references therein).
For example, in a typical  application, an MDS
can be used to form  clusters with low diameter, with the nodes in the MDS being the ``clusterheads'' \cite{moscibroda-survey}. Each clusterhead is  responsible for monitoring the nodes that are adjacent to it.
Having an MDS with low degree is useful
in a resource/energy-constrained setting since the number of
nodes monitored \emph{per} node in the MDS will be low (on average). This
can lead to  better load balancing, and consequently less resource or
energy consumption per node, which is crucial for ad hoc and sensor
networks, and help in extending the lifetime of such networks while also leading to better fault-tolerance. 
 For example,
in an $n$-node star graph, the above requirements imply that it is better
for  the leaf nodes to form the MDS rather than the central node alone.   In
fact,  the average degree of the MDS formed by the leaf  nodes --- which is
1 --- is within a constant factor of   the average degree of a star (which is
close to 2), whereas the average degree, $n-1$, of the MDS consisting of the
central node alone is much larger.}
A {\em centralized} polynomial time algorithm for computing a \bmds with (the best possible in general \footnote{That is,
there exists graphs with average degree $\delta$, where this bound is essentially the optimal.}) average degree $O(\frac{\delta \log \delta}{\log \log \delta})$ was given in \cite{balanced-minimal}. A distributed algorithm that gives the same bounds was left a key open problem.
We now present a distributed variant of this algorithm (cf.\ Algorithm~\ref{algo:bmds}) that uses our hypergraph \mis-algorithm as a subroutine.
\onlyLong{Note that since the \bmds problem is defined on standard graphs, we assume that \Cref{algo:bmds} executes on a standard synchronous network adhering to the CONGEST model of communication.
}
\begin{algorithm}[t]
  \onlyShort{\scriptsize}
  \begin{algorithmic}
\item[]
 \STATE Nodes compute the average network degree $\delta$.
 ~
  \STATE Every node $u$ of degree $ > 2\delta$ marks itself with probability $\frac{\log t}{t}$ where $t = \frac{2 \delta \log \delta}{\log \log \delta }$.

\STATE Every node of degree $\leq 2 \delta$ marks itself.

\STATE If a node $v$ is not marked, and none of the neighbors of $v$ are marked, then $v$ marks itself.
\STATE Let $\textsc{marked}$ be the set of nodes that are marked. 
Invoke the RMDS algorithm (cf.\ \Cref{sec:rmds}) on $G$ where the restricted set is given by \textsc{marked}.
\STATE Every node that is in the solution set of the RMDS algorithm remains in the final output set.
  \end{algorithmic}
  \caption{A distributed \bmds-algorithm.}
  \label{algo:bmds}
\end{algorithm}

\begin{theorem} \label{thm:bmds}
Let $\delta$ be the average degree of a graph $G$.
There is a CONGEST model algorithm that finds a \bmds with average degree $O(\frac{\delta \log \delta}{\log \log \delta })$ in expected $\tilde O(D + \min\{\Delta^\eps, n^{o(1)}\})$ rounds, where $D$ is the diameter, $\Delta$ is the maximum node degree of $G$, and $\eps > 0$ is
any constant.
\end{theorem}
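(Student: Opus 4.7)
My plan is to analyze Algorithm~\ref{algo:bmds} in three parts: its round complexity, its correctness as a minimal dominating set, and the bound on the expected average degree of its output.

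For the round complexity, computing $\delta$ reduces to aggregating the sum of local degrees over a BFS tree and broadcasting the result, which takes $O(D)$ rounds in CONGEST; every node can then locally compute $t$. Steps 2--4 are local: each node only needs its own degree, a private coin, and one round of neighbor communication to learn whether any neighbor is marked. The dominant cost is the call to RMDS, which by Theorem~\ref{thm:rmds} runs in $\tilde O(\min\{\Delta^\eps, n^{o(1)}\})$ expected rounds. Summing gives the claimed $\tilde O(D + \min\{\Delta^\eps, n^{o(1)}\})$ bound.

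For the minimal dominating set property, I first verify that \textsc{marked} itself dominates $V$, which is the precondition of RMDS. Line 3 adds every vertex of degree at most $2\delta$ to \textsc{marked}, and line 4 adds every remaining (necessarily high-degree) vertex that has no marked neighbor; consequently every vertex is in \textsc{marked} or has a neighbor there. Hence RMDS returns a minimal dominating set $D \subseteq \text{\textsc{marked}}$ of $G$, which is by definition a minimal dominating set of $G$.

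The heart of the argument is the average-degree bound, and here I would mirror the centralized analysis of \cite{balanced-minimal}. Since only high-degree vertices are sampled probabilistically, while low-degree vertices (always marked) contribute a total degree of at most $2\delta\cdot n = O(\delta n)$, one obtains the same bound on $\mathbb{E}[\sum_{v \in \text{\textsc{marked}}} \deg_G(v)]$ as in the centralized construction. A vertex $v$ with $\deg_G(v) > 2\delta$ is pulled in by line 4 only if none of its neighbors is marked; since every neighbor of high degree is marked independently with probability $(\log t)/t$, this event has probability at most $(1-\tfrac{\log t}{t})^{d(v)} \leq 1/t$ whenever $d(v) \geq t$, so the forced additions from line 4 are negligible compared to the $O(n\delta)$ term. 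Because $D \subseteq \text{\textsc{marked}}$ and the average degree of $G[D]$ is at most $\frac{\sum_{v \in D} \deg_G(v)}{|D|}$, the expected ratio comes out to $O(\tfrac{\delta \log \delta}{\log \log \delta})$ exactly as in \cite{balanced-minimal}. The principal obstacle I anticipate is showing that the minimality pass performed by RMDS does not shrink $|D|$ so much that the ratio degenerates; I would handle this by noting that RMDS only removes vertices whose domination responsibilities are already absorbed by other chosen nodes, so the lower bound on $|D|$ derived from the coverage property of \textsc{marked} carries over verbatim from the centralized analysis.
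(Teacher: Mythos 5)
Your proposal is correct and follows essentially the same route as the paper's proof: compute $\delta$ via a BFS tree and convergecast in $O(D)$ rounds, perform the local marking, invoke the \rmds{} algorithm of Theorem~\ref{thm:rmds}, and defer the average-degree bound of the marked set to the centralized analysis of \cite{balanced-minimal}. You simply spell out more of the details (the domination precondition of \rmds{} and a sketch of the marking analysis) than the paper does.
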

\onlyLong{
\begin{proof}
Computing the average degree in Step~1 of \Cref{algo:bmds} can be done by first electing a leader, then building a BFS-tree rooted at the leader, and finally computing the average degree by convergecast.

It was shown in \cite{balanced-minimal} that marking the nodes according to \Cref{algo:bmds} yields an average degree of $O(\frac{\delta \log \delta}{\log \log \delta })$.
The runtime bound follows since the first part of the algorithm can be done in $O(D)$ rounds and the running time of the \rmds-algorithm (cf. \Cref{thm:rmds}).
\end{proof}
}

\onlyLong{\subsection{Minimal Connected Dominating Sets (\mcds)} \label{sec:mcds}}
\onlyShort{\paragraph{Minimal Connected Dominating Sets (\mcds)} \label{sec:mcds}}
\onlyShort{\vspace{-0.2cm}}

Given a graph $G$, the \mcds problem requires us to find a minimal dominating set $M$ that is connected in $G$.
We now describe our distributed algorithm for solving \mcds in the CONGEST model (\onlyLong{ see \Cref{algo:mcds} for the complete pseudo code}\onlyShort{see the full paper for the complete pseudo code}) and argue its correctness.
We first elect a node $u$ as the leader using a $O(D)$ time algorithm of \cite{KPPRT13:PODC}.
Node $u$ initiates the construction of a BFS tree $B$, which has $k\le D$ levels, after which every node knows its level (i.e.\ distance from the leader $u$) in the tree $B$.
Starting at the leaf nodes (at level $k$), we convergecast the maximum level to the root $u$, which then broadcasts the overall maximum tree level to all nodes in $B$ along the edges of $B$.

We then proceed in iterations processing two adjacent tree levels at a time, starting with nodes at the maximum level $k$.
Note that since every node knows $k$ and its own level, it knows after how many iterations it needs to become active.
Therefore, we assume for simplicity that all leafs of $B$ are on level $k$.
We now describe a single iteration concerning levels $i$ and $i-1$:
First, consider the set $L_i$ of level $i$ nodes that have already been added to the output set $M$ in some previous iteration;
initially, for $i=k$, set $L_i$ will be empty.
We run the $O(D + \sqrt{n})$ time algorithm of \cite{thurimella} to find maximal connected components among the nodes in $L_i$ in the graph $G$; let $\cC=\{C_1,\dots,C_\alpha\}$ be the set of these components and let $\ell_j$ be the designated component leader of component $C_j \in \cC$.


We now simulate a hypergraph that is defined as the following bipartite server client graph $H$:
Consider each component in $\cC$ as a \emph{super-node}; we call the other nodes on level $i$ \emph{non-super-nodes}.
The set $C$ of clients contains all super-nodes in $\cC$ and all nodes on level $i$ that are neither adjacent to any super-node nor have been added to the output set $O$.
The set $S$ of servers contains all nodes on level $i-1$.
The edges of $H$ are the induced inter-level edges of $G$ between servers and non-super-node clients. In addition, we add an edge between a server $s \in S$ and a (super-node) client $C_j \in \cC$, iff there exists a $v \in C_j$ such that $(v,s) \in E(G)$.
Conceptually, we can think of the edges incident to $C_j$ as pointing to the component leader node $\ell_j$.
Next, we find a MIS (cf.\ \Cref{sec:hyper}) on the (virtual) hypergraph $H$. \onlyShort{We refer to the full
paper for details.}
\onlyLong{
We sketch how we simulate the run of the MIS algorithm on $H$ in $G$:
If a node $v \in C_j$ receives a message from a node in $S$, then $v$ forwards this message to the component leader $\ell_j$. (If a node receives multiple messages at the same time, it simply forwards all messages sequentially by pipelining.) 
After waiting for $\tilde O(D)$ rounds, the component leader $\ell_j$ locally simulates the execution of $\ell_j$ according to the MIS algorithm by using the received (forwarded) messages.
Any messages produced by the simulation at $\ell_j$ are then sent back through the same paths to the neighbors of $C_j$.
Let $O_i$ be the set of nodes (on level $i-1$) that are not in the MIS; note that $O_i$ forms a minimal vertex cover on the hypergraph given by $H$.
At the end of this iteration, we add $O_i$ to the output set $M$ and then proceed to process levels $i-1$ and $i-2$.
}



\begin{theorem} \label{thm:mcds}
  \mcds can be solved in the CONGEST model in expected time $\tilde O(D (D\min\{\Delta^{o(1)}, n^{o(1)}\} +\sqrt{n}) )$.
\end{theorem}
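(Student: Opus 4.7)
The plan is to prove correctness and runtime separately, by analyzing the level-by-level processing of the BFS tree $B$.

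For correctness, I would proceed by reverse induction on the tree level $i$, maintaining the invariant that after processing the iteration for levels $(i-1,i)$ the following holds: (a) every node at levels $\geq i$ is either in $M$ or adjacent to a node in $M$; (b) the subgraph of $G$ induced by $M \cap (\text{levels} \geq i-1)$ together with the super-node component merging at each level is connected upward to level $i-1$; (c) no node currently in $M$ can be removed without violating (a) or (b). The base case $i=k$ is trivial since no level-$k$ node is yet in $M$. For the inductive step, note that the MIS on the simulated hypergraph $H$ yields a complement set $O_i \subseteq S$ which is a minimal hitting set of the clients of $H$. Every client of $H$ on level $i$ (whether a super-node component $C_j$ or a previously undominated non-super-node) is hit, so every level-$i$ component gets a level-$(i-1)$ neighbor in $M$, giving both domination of undominated level-$i$ nodes and the required connectivity upward. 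Minimality in $O_i$ means that removing any $s \in O_i$ would leave some client of $H$ uncovered, which corresponds exactly to leaving some level-$i$ node or component uncovered/disconnected in $G$; combining this with the inductive hypothesis (which already ensures no deeper node can be dropped) completes the step. When the root level is processed, connectivity to the leader closes and we obtain a connected minimal dominating set.

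For the running time, I would sum over the $O(D)$ iterations. Leader election plus BFS construction plus global broadcast of the max depth $k$ cost $O(D)$. Within a single iteration on levels $(i-1,i)$: Thurimella's algorithm finds the maximal connected components among level-$i$ nodes already in $M$ in $\tilde O(D+\sqrt n)$ rounds. The hypergraph MIS algorithm of Theorem~\ref{thm:mis} runs in $\tilde O(\min\{\Delta^{\epsilon},n^{o(1)}\})$ rounds on $H$, but each \emph{virtual} round of that algorithm must be simulated in $G$: a message along an edge incident to a super-node client $C_j$ is carried to/from the component leader $\ell_j$ through the component, which by the low-diameter component tree takes $\tilde O(D)$ rounds per virtual round (using pipelining so that the many messages into $\ell_j$ do not blow up beyond $\tilde O(D)$ amortized). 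Hence each iteration spends $\tilde O(D + \sqrt n + D \cdot \min\{\Delta^{o(1)},n^{o(1)}\})$ rounds, and multiplying by the $O(D)$ iterations yields the claimed $\tilde O\!\bigl(D(D\min\{\Delta^{o(1)},n^{o(1)}\}+\sqrt n)\bigr)$ bound.

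The main obstacle will be the simulation of the hypergraph MIS algorithm through the BFS structure without congestion blow-up. Two issues need care: first, the super-node abstraction requires that all messages destined for a virtual hyperedge-endpoint are routed to (and gathered at) the component leader $\ell_j$, and many level-$(i-1)$ servers may simultaneously talk to the same $C_j$; I would argue this is resolvable within $\tilde O(D)$ rounds using standard pipelining along the spanning tree of $C_j$. Second, the MIS bound of Theorem~\ref{thm:mis} is stated for a hypergraph of $n$ nodes, but the hypergraph $H$ at each iteration has at most $n$ servers and at most $n$ clients, so the $\Delta^{o(1)}n^{o(1)}$ bound applies; in particular, we do not need the stronger polylogarithmic regime and may always use the $\min\{\Delta^{\epsilon}\polylog n, n^{o(1)}\}$ variant. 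Once these two simulation details are settled, correctness of $O_i$ being a minimal hitting set of $H$ transfers mechanically into global minimality for $M$ by the inductive argument above.
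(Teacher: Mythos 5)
Your proposal follows the same layered BFS approach as the paper (same algorithm, same per-iteration cost decomposition, same final multiplication by the $O(D)$ levels), but it glosses over the two points where the actual proof has to do real work, and both are genuine gaps as written. The first concerns the simulation of the hypergraph MIS on the super-nodes. You claim that routing all messages destined for a super-node client $C_j$ to its component leader $\ell_j$ costs $\tilde O(D)$ per virtual round ``using standard pipelining.'' That does not hold as stated: a super-node may be incident to up to $\Theta(n)$ servers in $H$, so a single virtual round of the MIS algorithm can generate $\Theta(n)$ distinct $O(\log n)$-bit messages that all must reach $\ell_j$, and convergecasting them costs $\Omega(n/\deg(\ell_j))$ rounds no matter how you pipeline; the messages of the MIS algorithm of Theorem~\ref{thm:mis} are not obviously aggregatable. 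The paper closes exactly this hole with a separate dimension-reduction lemma (Lemma~\ref{lem:logn}): by random subsampling of hypernodes, repeated $O(\log n)$ times, one may assume every hyperedge of $H$ has dimension $O(\log n)$, so each client receives only $O(\log n)$ messages per virtual round and the forwarding delay really is $\tilde O(D)$. Without this ingredient (or an equivalent aggregation argument) your per-iteration bound does not follow.

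The second gap is in minimality. Your invariant (c) asserts that after processing levels $(i-1,i)$ no node of $M$ is removable, and you dispose of the inductive step by saying the hypothesis ``already ensures no deeper node can be dropped.'' But the hypothesis only says deeper nodes were irredundant \emph{before} $O_i$ was added; adding fresh nodes at level $i-1$ could in principle make a node added in an earlier iteration redundant, and nothing in your argument excludes this. The paper's proof supplies the missing observation: a node $w$ at level $j$ was added (by minimality of the vertex cover in its own iteration) only because it is needed to dominate or connect some client $x$ at level $j+1$, and a newly added node $z$ at level $j-1$ cannot be adjacent to anything at level $j+1$ by the BFS level structure, so $z$ cannot take over $w$'s role. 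You need this explicit level-separation argument to push the induction through. Both gaps are fillable, but as written the proof is incomplete at precisely the two places the theorem is nontrivial.
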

\onlyLong{
\begin{proof}
We first argue the correctness of the algorithm.
It is straightforward to see that, after $k$ iterations, the \emph{solution set} $M= \bigcup_{i=1}^k O_i$ forms a dominating set of $G$.
For connectivity, note that since $O_i$ is a minimal vertex cover on the induced subgraph $H$, it follows that every super-node in the client set has a neighboring node in $O_i$.
This guarantees that $M$ remains connected (in $G$) after adding $O_i$.

Next, we consider minimality.
Suppose that there exists a node $w$ in the solution set $M$ that is \emph{redundant} in the sense that it can be removed from $M$ such that $M\setminus \{w\}$ is a MCDS of $G$.
Assume that $w$ became redundant in the iteration when processing levels $j$ and $j-1$.
Note that by the properties of the BFS tree, $w$ must be either on levels $j$ or $j-1$, since, in this iteration, we only add new nodes to $M$ that are themselves on level $j-1$.
By the correctness of the MIS algorithm, $w$ does not become redundant in the same iteration that it is added to $M$, thus $w$ can only be on level $j$.
Moreover, observing that $w$ can only have been added to $M$ in the preceding iteration 
to dominate some node $x$ on level $j+1$, it follows that $w$ cannot be made redundant by adding some node $z$ on level $j-1$, since $z$ cannot dominate $x$.
This shows that the set $M$ is minimal as required.

We now argue the running time bound.
The pre-processing steps of electing a leader and constructing a BFS tree can be completed in $O(D)$ rounds.
The for-loop of the MCDS algorithm has $O(D)$ iterations, thus it is sufficient if we can show that we can simulate a single iteration (including finding a MIS on the constructed hypergraph $H$) in $\tilde O(\sqrt{n} + D\min\{\Delta^{\eps},n^{o(1)}\})$ rounds.
Consider the iteration that determines the status of nodes in level $i$, i.e., the nodes on level $i$ form the set of servers as defined in MCDS algorithm.
First, we run the algorithm of \cite{thurimella}, which, given a graph $G$ and a subgraph $G'$, yields maximal connected components (w.r.t.\ $G'$) in time $\tilde O(D + \sqrt{n})$, where $D$ is the diameter of $G$.
Then, we simulate the MIS algorithm \Cref{sec:hyper} on the hypergraph $H$ given by the set of servers and the clients (some of which are super-nodes).
Consider a super-node $C_j$.
We can simulate a step of the MIS algorithm by forwarding all messages that nodes in $C_j$ receive (from servers on level $i-1$) to the component leader node $\ell_j$ by sequentially pipelining simultaneously received messages.
The following lemma shows that we can assume that each client has at most $O(\log n)$ incident servers, i.e., the dimension of the hypergraph $H$ is bounded by $O(\log n)$:
\begin{lemma} \label{lem:logn}
If there is an algorithm $\cA$ that solves MIS on $n$-node $m$-edge hypergraphs of dimension up to $3\log (m+n)$ in $T(n)$ rounds for some function $T$, then there is an algorithm $\cA'$ that solves hypergraph MIS on any $n$-node $m$-edge hypergraph with any dimension  in $\tilde O(T(n))$ rounds. 
\end{lemma}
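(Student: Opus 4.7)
\medskip
\noindent\textbf{Proof plan for Lemma \ref{lem:logn}.}
My plan is to let $\cA'$ invoke $\cA$ on a single dimension-reduced hypergraph $\cH''$ and then apply a short correction phase. Construct $\cH''$ on the same vertex set as $\cH$: keep every hyperedge of $\cH$ whose size is at most $D := 3\log(m+n)$, and, for each larger hyperedge $e$, add to $\cH''$ a collection of $k = \Theta(\log n)$ independent uniformly random subsets of $e$, each of size exactly $D$. By construction $\cH''$ has dimension at most $D$ and $O(m\log n)$ hyperedges, so a single call to $\cA$ produces an MIS $I$ of $\cH''$ in $T(n)$ rounds, which dominates the cost.

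The analysis reduces to showing that, with high probability, $I$ is in fact an MIS of the original $\cH$. Independence is automatic: every sampled hyperedge $e'$ satisfies $e'\subseteq e$, and $e'\not\subseteq I$ therefore forces $e\not\subseteq I$, so no hyperedge of $\cH$ (short or long) is fully contained in $I$. For maximality, fix any $v\notin I$; by maximality of $I$ in $\cH''$ some $e^{*}\in E(\cH'')$ satisfies $e^{*}\subseteq I\cup\{v\}$. If $e^{*}$ is a short edge, it also lies in $\cH$ and we are done. Otherwise $e^{*}=e'$ is a random subset of some long edge $e$, and we must argue that $e\subseteq I\cup\{v\}$ as well.

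The heart of the proof is a union bound over pairs $(v,e)$ with $v\in e$. Suppose, for contradiction, that $|(I\cup\{v\})\cap e|\le (1-\varepsilon)|e|$ for a suitable constant $\varepsilon>0$. Then for any of the $k$ sampled $D$-subsets $e'$ of $e$, the conditional probability that $e'\subseteq I\cup\{v\}$ is at most $(1-\varepsilon)^{D}\le (m+n)^{-3\varepsilon}$, so by a union bound the probability that any of the $k$ samples witnesses $v$'s exclusion is at most $k\cdot(m+n)^{-3\varepsilon}=o(1/(mn))$. Taking one more union bound over all $O(mn)$ pairs $(v,e)$ gives, w.h.p., the desired implication: whenever a random sampled subset $e'\subseteq e$ certifies $v$'s exclusion from $I$, almost the entire edge $e\setminus\{v\}$ already lies in $I$.

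The main obstacle is the residual regime where $|I\cap e|$ is within a small additive slack of $|e|$: there the concentration above degrades because random $D$-subsets of $e$ land in $I\cup\{v\}$ with probability close to one, yet there may still be a single $u\in e\setminus(I\cup\{v\})$ making $v$ falsely excluded. I would handle this with an $O(\log n)$-round correction phase: each $v\notin I$ locally queries the (at most polylog many, after the reduction above) long edges $e\ni v$ whose current intersection with $V\setminus(I\cup\{v\})$ has size exactly one, and any such $v$ whose exclusion is not justified by $\cH$ proposes itself for insertion. Conflicting proposals are resolved by a Luby-style random priority scheme on the short-edge projection, which again reduces to $O(\log n)$ parallel invocations on constant dimension sub-instances. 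Since each correction step adds a constant fraction of the still-unjustified vertices in expectation and all invocations run on $O(\log n)$-dimensional instances, the total overhead is $\polylog(n)\cdot T(n)=\tilde O(T(n))$, yielding the claimed bound.
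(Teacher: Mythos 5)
Your approach has a genuine gap in the maximality argument, and it occurs at two levels. First, the concentration bound is circular: the MIS $I$ returned by $\cA$ on $\cH''$ is a function of which random $D$-subsets were sampled, so you cannot fix the set $I\cup\{v\}$ and then ask for the probability that a fresh random $D$-subset of $e$ lands inside it. To decouple them you would have to union-bound over all candidate sets $S\subseteq e$ with $|S|\le(1-\varepsilon)|e|$, of which there are $2^{\Theta(|e|)}$, and the per-sample bound $(1-\varepsilon)^{3\log(m+n)}$ is only polynomially small, nowhere near enough when $|e|$ is large. Second, even granting the concentration, the conclusion you reach --- ``almost the entire edge $e\setminus\{v\}$ lies in $I$'' --- does not justify excluding $v$: maximality in $\cH$ requires \emph{all} of $e\setminus\{v\}$ to be in $I$, and your argument leaves room for up to $\varepsilon|e|$ vertices of $e$ outside $I\cup\{v\}$. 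Your correction phase only addresses the case where exactly one vertex is missing, rests on the unsupported claim that each vertex sees only polylogarithmically many relevant long edges (the degree can be $\Delta$), and invokes a ``Luby-style scheme on the short-edge projection'' that makes constant-fraction progress --- an assertion that essentially presupposes a solution to the very hypergraph MIS problem the lemma is about.

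The paper avoids all of this by randomizing over \emph{vertices} rather than over sub-edges: mark each hypernode independently with probability $1/2$ and let $\cH'$ be the hypergraph on the marked vertices whose edges are the edges of $\cH$ that are \emph{entirely} marked. An edge of size more than $3\log(m+n)$ survives with probability at most $(m+n)^{-3}$, so w.h.p.\ $\cH'$ has small dimension and $\cA$ applies; crucially, every edge of $\cH'$ is a genuine edge of $\cH$, so every inclusion and every exclusion that $\cA$ makes is exactly justified in $\cH$ --- no correction is needed. One then commits the MIS of $\cH'$, deletes all marked vertices (and the edges through the excluded ones), and recurses on the remaining half of the vertices; $O(\log n)$ recursion levels give the $\tilde O(T(n))$ bound. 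If you want to salvage a one-shot scheme, you would need sampled certificates that imply containment of the \emph{whole} edge, which random $D$-subsets of a long edge cannot provide.
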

\onlyLong{
\begin{proof}[Proof of \Cref{lem:logn}]
$\cA'$ works as follows. Let $\cH$ be the input graph. We will use $M$ as a final MIS solution for $\cA'$; initially, $M=\emptyset$. First, we mark every hypernode with probability $1/2$. Let $\cH'$ be the subgraph of $\cH$ induced by marked nodes (i.e. $\cH'$ consist of every edge such that every node it contains is marked). Observe that, with probability at least $1-1/m^2$, every hyperedge in $\cH'$ has dimension at most $3\log m$ because every hyperedge that contains more than $3\log m$ nodes will have all its nodes marked with probability at most $m^3$. We now run $\cA$ to solve hypergraph MIS on $\cH'$. We add all nodes in the resulting MIS to $M$ and remove them from $\cH$. Additionally, we remove from $\cH$ all other nodes in $\cH'$ (that are not in the MIS of $\cH'$) and edges containing them. (These nodes cannot be added to the MIS of $\cH$ so they are removed.) We then repeat this procedure to find the MIS of the remaining graph. Observe that this procedure removes $n/2$ nodes in expectation. So, we have to repeat it only $O(\log n)$ times in expectation. Each of this procedure takes $O(T(n))$ time, so we have the running time of $\tilde O(T(n))$ in total. 
\end{proof}
}
It follows from \Cref{lem:logn} that forwarding messages towards the component leader can incur a delay of at most $O(\log n)$ additional rounds due to congestion.
This means that one step of the MIS algorithm can be implemented in $\tilde O(D)$ rounds, and thus the total time complexity of a single iteration of the for-loop takes time $\tilde O(D\min\{\Delta^{\eps}, n^{o(1)}\} + \sqrt{n})$, as required.
\end{proof}
}

\onlyLong{
\begin{algorithm}[t]
  \begin{algorithmic}[1]
\item[]
  \STATE Let $M$ be the final output set; initially $M=\emptyset$.
  \STATE We perform leader election using an $O(D)$ time algorithm of \cite{KPPRT13:PODC}, yielding some leader $\ell$.
  \STATE Node $\ell$ initiates the construction of a breadth-first-search tree $B$ of  $k\le D$ levels.
  \STATE The leafs of $B$ report their level (i.e.\ distance from the root) to $\ell$ by convergecast and the leader $\ell$ then rebroadcasts the maximum level to all children along the tree edges.
  At the end of this step, every node knows its level  in $B$ and the maximum tree level.
  \item[]
  \FOR{tree level $i=k,\dots,1$}
  \STATE Let $L_i \subseteq M$ denote the nodes on level $i$ that have been added to $M$. (Note that $L_k$ is empty initially.)
  Find a set of maximal connected components $\cC=\{C_1,\dots, C_\alpha\}$ of the nodes in $L_i$ using the $O(D+\sqrt{n})$ time algorithm of \cite{thurimella}; let $\ell_1,\dots,\ell_\alpha$ denote the roots of the respective components.
\item[]
  \item[] \textsl{Solving MIS on the hypergraph induced by levels $i$ and $i-1$:}
  \STATE We construct the following bipartite server client graph $H$.
  Consider each component in $\cC$ as a ``super-node''.
  The set $C$ of clients contains all super-nodes in $\cC$ and all nodes on level $i$ that are neither adjacent to any super-node nor have been added to the output set $O$.
  The set $S$ of servers contains all nodes on level $i-1$.
  The edges of $H$ are the induced inter-level edges of $G$ between servers and clients that do not form a component. In addition, add an edge between $s \in S$ and $C_j \in C$, iff there exists an $v \in C_j$ such that $(v,s) \in E(G)$.
  Conceptually, we can think of the edges incident to $C_j$ to point to the component leader node $\ell_j$.
  \STATE Find a MIS (cf.\ \Cref{sec:hyper}) on the virtual hypergraph $H$:
 \item[] We sketch how we simulate the run of the MIS algorithm on $H$ in $G$:
  If a node $v \in C_j$ receives a message from a node in $S$, $v$ forwards this message to the component leader $\ell_j$. (If a node receives multiple messages at the same time, it simply forwards all messages sequentially by pipelining.) 
  After waiting for $\tilde O(D)$ rounds, the component leader $\ell_j$ locally simulates the execution of the MIS algorithm by using the received (forwarded) messages.
  Any messages produced by the simulation at $\ell_j$ are then sent back through the same paths to the neighbors of $C_j$.
  \STATE Add every node on level $i-1$ that is not in the MIS to the output set $M$.
  \ENDFOR
\item[]
  \end{algorithmic}
  \caption{A distributed \mcds-algorithm.}
  \label{algo:mcds}
\end{algorithm}
}


\onlyLong{
\vspace{-0.5cm}
\subsection{Lower Bounds} \label{sec:lb}
\vspace{-0.1cm}
\onlyLong{In this section}\onlyShort{In the full paper} we show lower bounds on the number of rounds for computing a minimal connected dominating set (\mcds).
First, we show that $\tilde{\Omega}(D + \sqrt{n})$ rounds are necessary in the worst case for solving \mcds in the CONGEST model showing a reduction to the spanning connected subgraph problem (\scs).  

We then consider the LOCAL model where nodes can send messages of unbounded size.
Here we present a lower bound
of $\Omega(D)$ rounds for computing a minimal connected dominating set (\mcds).
While it is easy to see that this lower bound holds on a cycle of $n$ nodes, we
show that $\Omega(D)$ is a \emph{universal} bound in the sense that, for any given diameter $D=D(n)$ as a function of $n$, we can construct a graph where the algorithm takes $\Omega(D)$ time.
As a byproduct of our proof, we obtain the same lower bound for spanning tree computation and maximal clique.
}
\onlyLong{
\begin{theorem} \label{thm:CONGEST-lb}
There exists an $\epsilon > 0$ and a graph $G$ of $n$ nodes and diameter $D$, such that any $\epsilon$-error \mcds algorithm $R$ takes $\tilde{\Omega}(D + \sqrt{n})$ rounds in the CONGEST model.
\end{theorem}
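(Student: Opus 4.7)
The plan is to follow the reduction-based lower bound framework of Das Sarma et al.~\cite{STOC11}, which proves $\tilde{\Omega}(D+\sqrt{n})$ lower bounds for a large class of verification and optimization problems by reducing from the two-party set disjointness problem. Concretely, I would first exhibit a reduction from the spanning connected subgraph problem (\scs) to \mcds, and then invoke the known $\tilde{\Omega}(D+\sqrt{n})$ lower bound for \scs that is already established in \cite{STOC11} via the standard ``lower bound graph'' construction.

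First I would recall the \scs verification problem: given a graph $G$ together with a marked edge subset $E'\subseteq E(G)$, decide whether $(V(G),E')$ is spanning and connected. The Das Sarma et al.\ construction produces a family of graphs of diameter $D$ in which $E'$ is partitioned across a bandwidth bottleneck of width $\tilde O(\sqrt{n})$, so that an $\epsilon$-error distributed algorithm for \scs yields a two-party protocol for set disjointness on $\Theta(n)$ bits, forcing the $\tilde{\Omega}(D+\sqrt{n})$ round lower bound. I would cite this directly rather than reprove it.

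The main step is the reduction from \scs to \mcds. Given an instance $(G,E')$ of \scs, I would construct an auxiliary graph $G^\star$ on $\Theta(n)$ nodes and diameter $\Theta(D)$ as follows: take a copy of $V(G)$, keep every edge of $E'$, attach to every vertex $v\in V(G)$ a private ``forcer'' gadget (for example a private leaf $v'$ connected only to $v$, possibly together with a small constant-size clique) whose only purpose is to force $v$ into \emph{every} minimal connected dominating set. The forcers guarantee that in any MCDS $M$ of $G^\star$, the set $V(G)$ is contained in $M$. Minimality of $M$ with respect to connectivity then forces $M$ to be the vertex set of a \emph{minimally connected} subgraph of $G^\star$ on $V(G)$, and by construction the only edges available for this are the edges of $E'$ plus the forcer edges. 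I would then argue that \scs has a yes-instance iff the MCDS algorithm outputs a solution consistent with $E'$ spanning and connecting $V(G)$, which each node can check locally in $O(1)$ rounds. Any $T$-round MCDS algorithm thus gives an $O(T)$-round \scs algorithm on a graph of the same diameter up to a constant factor, which via \cite{STOC11} yields the claimed $\tilde{\Omega}(D+\sqrt{n})$ bound with constant error probability.

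The main obstacle will be designing the forcer gadget so that (i) it does not blow up the diameter beyond $\Theta(D)$, (ii) it does not create additional connectivity shortcuts across the bandwidth bottleneck of the Das Sarma et al.\ construction (so the $\tilde\Omega(\sqrt{n})$ part survives), and (iii) it correctly forces every original vertex of $G$ into the MCDS regardless of the algorithm's random choices, so that both the \emph{domination} and \emph{minimality} conditions translate faithfully into ``$E'$ spans and connects $V(G)$.'' Once the gadget is in place the rest is bookkeeping: verifying that the constructed graph has $\Theta(n)$ nodes and diameter $\Theta(D)$, and that the simulation overhead is $O(1)$ per round in the CONGEST model, which then yields the theorem.
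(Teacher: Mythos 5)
Your overall strategy --- reduce from the spanning connected subgraph problem and invoke the $\tilde{\Omega}(D+\sqrt{n})$ bound of Das Sarma et al.\ --- is exactly the route the paper takes, but your concrete construction has a genuine gap. You build $G^\star$ by keeping \emph{only} the marked edges $E'$ (plus forcer gadgets). In a ``no'' instance of \scs, $E'$ does not connect $V(G)$, so $G^\star$ is disconnected; then no connected dominating set exists at all, the \mcds algorithm has no well-defined output, and you cannot even simulate it, since the communication network underlying the reduction must itself be connected and must essentially be $G$. Relatedly, your final step --- ``each node can check locally in $O(1)$ rounds'' whether the output witnesses that $E'$ spans and connects $V(G)$ --- does not go through: deciding whether a candidate set is \emph{connected} is not a local $O(1)$-round test; it is essentially the \scs problem you started from. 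So the yes/no answer must be encoded in a property of the \mcds output that is locally visible, which your construction does not provide.

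The paper's construction repairs both issues: it keeps \emph{every} edge of $G$ but subdivides each edge $(u,v)$ by a vertex $b_{u,v}$, and attaches a pendant ``outer'' vertex to every original vertex and to every subdivider of an edge of $H=E'$ (but \emph{not} to subdividers of edges outside $H$). The resulting graph is always connected, has $\Theta(n)$ vertices and edges (using that the \cite{STOC11} lower-bound graph has $O(n)$ edges), and a short minimality argument shows every \mcds must contain exactly $V(G)\cup B[H]$ plus, \emph{only if $H$ fails to span and connect $G$}, some subdividers of non-$H$ edges. The test ``does the \mcds contain a subdivider of a non-$H$ edge?'' is genuinely local, each such vertex knowing its own status. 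Your forcer-gadget idea corresponds to the paper's pendant vertices, but the essential missing ingredient is retaining the non-$E'$ edges in subdivided form so that connectivity is always available and the ``no'' case is witnessed by which subdividers the \mcds is forced to use.
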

\begin{proof}
We will show the lower bound via reduction from the spanning connected subgraph (\scs) problem, for which an $\tilde{\Omega}(D + \sqrt{n})$ lower bound is already known (cf.\ \cite{STOC11}).
Consider an instance of the \scs problem: we are given a set of edges defining a subgraph $H$ of graph $G$ and every node must output ``yes'' if $H$ spans $G$ and is connected; otherwise at least one node must output ``no''.

Suppose that we are given the \mcds algorithm $R$ as stated in the theorem.
We will first show that, as long as graph $G$ has $O(n)$ edges, we can instantiate $R$ to yield a solution for \scs without significant overhead.
Since the lower bound graph for the \scs problem in \cite{STOC11} has $O(n)$ edges, this will yield the result.
}
\onlyLong{
For an instance of \scs given by $G$ and $H \subseteq G$, we will construct a graph $G' = G'(G,H)$ of $\Theta(|V(G)|)$ vertices and $\Theta(|E(G)|)$ edges:
We initialize $V(G')$ to $V(G)$ and subdivide each edge $(u,v) \in E(G)$ by adding a \emph{subdividing vertex} $b_{u,v}$ to $V(G')$ and
edges $(u,b_{u,v})$, $(b_{u,v},v)$ to $E(G')$.
Let $B[H]$ be the set of all vertices that subdivide an edge in $H$ and let $B[G\setminus H]$ be the set of vertices subdividing other edges in $E(G) \setminus E(H)$.
Then, for every vertex $u \in V(G) \cup B[H]$ we add an \emph{outer vertex} $g_u$ to $V(G')$ and attach it to $u \in V(G')$ by adding the edge $(g_u,u)$ to $E(G')$.
In other words, we attach outer vertices to all nodes that were part of the original graph $G$ and to all nodes that subdivide an edge in $H$.

\begin{lemma} \label{lem:scs}
If $M$ is an minimal connected dominating set of $G'$, then the following holds:
$\forall u \in M\colon u \notin B[G\setminus H]$ if and only if $H$ is a spanning connected subgraph of $G$'
\end{lemma}
\onlyLong{
\begin{proof}
First, observe that to dominate an outer vertex $g_u$, it is necessary that either $g_u \in M$ or its (only) neighbor $v$ is in $M$.
In the former case, it follows that $v$ must also be in $M$ to satisfy connectivity.
But then we could remove $g_u$ from $M$ and still guarantee domination; thus it follows that no outer vertex is in $M$ and, every neighbor of an outer vertex is in $M$.
In particular, this means that all vertices of $G$ and all vertices that subdivide edges of $H$ must be in $M$ (since for each of these we added an outer vertex).
Finally, we observe that $H$ is not a connected subgraph if and only if $M$ needs to contain vertices that subdivide edges \emph{not} in $H$.
This completes the proof of \Cref{lem:scs}.
\end{proof}
}
Armed with \Cref{lem:scs}, we can simply invoke algorithm $R$ to test whether $H$ is a spanning connected subgraph of $G$.
Assuming that $|E(G)| \in O(n)$, it follows that asymptotically $G'$ and $G$ have the same number of vertices and edges and thus it is straightforward to simulate the run of the MCDS algorithm $R$ on the (virtual) graph $G'$ on top of the actual network $G$.
From \cite{STOC11} we know that there exists a graph of $n$ nodes and $O(n)$ edges where \scs takes time $\Omega(D + \sqrt{n})$.
This completes the proof of \Cref{thm:CONGEST-lb}.
\end{proof}
}
\onlyLong{
\begin{theorem}[Universal Lower Bound] \label{thm:MCDS-ST-lowerbound}
Let $R$ be an algorithm that computes a Spanning Tree  (resp.\ maximal clique and \mcds) in the LOCAL model with probability at least $15/16+\eps$, for any constant $\eps>0$.
Then, for every sufficiently large $n$ and every function $D(n)$ with $2 \le D(n) < n/4$, there exists a graph $G$ of $n' \in \Theta(n)$ nodes and diameter $D' \in \Theta(D(n))$ where $R$ takes $\Omega(D)$ rounds with constant probability.
\end{theorem}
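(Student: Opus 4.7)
The strategy is to prove the $\Omega(D)$ lower bound for each of the three problems (spanning tree, maximal clique, \mcds) via a direct indistinguishability argument. For each admissible diameter function $D=D(n)$, I will construct an $n$-vertex graph $G$ of diameter $\Theta(D)$ together with a pair of ID assignments $I_1,I_2$ such that (a) from the local view at some designated vertex $v$, the instances $(G,I_1)$ and $(G,I_2)$ are identical up to some radius $r=\Omega(D)$; and (b) any correct output requires $v$ to behave differently in the two instances. A standard coupling shows that any $r$-round algorithm produces the same transcript distribution at $v$ in both instances, so its error probability on at least one of them is at least $1/2$, comfortably above the $1/16$ threshold. Yao's minimax principle over the uniform distribution on $\{I_1,I_2\}$ then gives the bound against randomized algorithms.

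\textbf{Core construction.} The skeleton graph is a cycle $C_L$ of length $L=2D$ together with $n-L$ pendant vertices attached to a distinguished cycle vertex $v^\star$. This yields $\Theta(n)$ vertices, diameter $D+1=\Theta(D)$, and leaves $v$, the vertex antipodal to $v^\star$, at distance $D$ from $v^\star$. The labelings $I_1,I_2$ will agree on the radius-$r$ ball around $v$ but differ in a small gadget placed near $v^\star$.

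\textbf{Problem-specific arguments.} For \mcds, the pendants force $v^\star$ into every minimal CDS, and a short case analysis shows that any MCDS of $G$ equals $V(C_L)$ minus two consecutive non-$v^\star$ vertices (which form a ``window'' of exclusion somewhere on the far arc of the cycle). A small asymmetric gadget attached to $v^\star$ (e.g., extra pendants with specific IDs) can be used to eliminate all valid exclusion windows except those lying strictly to one side of $v$, forcing $v\in M$ in $I_1$ and $v\notin M$ in $I_2$. For spanning tree, any ST of $G$ contains every pendant edge and omits exactly one cycle edge; by placing a similar gadget near $v^\star$ and using the mandatory parent-pointer output at each cycle vertex, we design $I_1,I_2$ so that the unique omitted cycle edge lies on opposite sides of $v$, making $v$'s parent pointer swing from its left neighbor to its right neighbor. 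For maximal clique, we modify the core by attaching a triangle at each end of a diameter-$\Theta(D)$ path (padded to $n$ vertices while preserving the diameter order); every MC is exactly one of these triangles, and labelings that swap the IDs at the two ends force any correct algorithm to choose the opposite triangle in $I_1$ versus $I_2$, while keeping the midpoint $v$'s $r$-neighborhood identical.

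\textbf{Main obstacle.} The delicate step is the design of the gadget near $v^\star$ so that the global constraints of each problem genuinely force $v$ to output differently, rather than merely allowing it. Because each of the three problems admits many valid outputs, it is not enough to argue that $v$'s decision depends on far-away structure; I must exhibit instances in which \emph{every} valid solution has a prescribed local behavior at $v$. Building such gadgets (especially for \mcds, where the minimality requirement must be coordinated with the pendant-induced inclusion of $v^\star$) is the technically delicate part; the coupling and Yao-style translation to a randomized $\Omega(D)$ bound are then essentially routine.
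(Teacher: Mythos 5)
Your overall architecture --- a two-instance indistinguishability argument at a single vertex $v$, coupled with Yao's principle --- does not go through for these problems, and the difficulty you flag at the end (``building such gadgets ... is the technically delicate part'') is not a technicality but the actual obstruction. All three problems admit many valid outputs, and in your constructions the algorithm retains the freedom to make $v$'s local output identical in both instances while still being globally correct. Concretely: in the cycle-plus-pendants graph, \emph{every} pair of consecutive non-$v^\star$ cycle vertices is a legal exclusion window for \mcds, and pendants attached at $v^\star$ only constrain $v^\star$'s own membership --- they cannot ``eliminate all valid exclusion windows except those lying strictly to one side of $v$.'' If instead you pin the window structurally (e.g.\ by which vertices lack pendants), then either the window is far from $v$ and $v$'s output is the same in both instances, or the window contains $v$ and $v$ sees this in its radius-$0$ neighborhood. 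The spanning-tree variant fails for the same reason (the omitted cycle edge can be any edge, so no ID gadget near $v^\star$ forces $v$'s parent pointer), and the maximal-clique construction is additionally broken because in a path with triangles at the ends every interior path edge is itself an inclusion-maximal clique, so ``every MC is one of the two triangles'' is false. In short, you cannot exhibit a pair of locally indistinguishable instances in which \emph{every} valid solution prescribes different behavior at $v$; what is hard about these problems is not that $v$'s answer depends on far-away data, but that all vertices must implicitly \emph{agree} on a single global symmetry-breaking choice (which bridge/cluster/edge is the special one).

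The paper's proof captures exactly this agreement difficulty by a \emph{reduction from leader election} rather than a direct indistinguishability argument. It builds a highly symmetric ``bridge vertex'' graph of diameter $\Theta(D)$, observes that any \mcds omits exactly one bridge vertex (resp.\ any spanning tree disconnects exactly one bridge cluster, any maximal clique selects one neighborhood), and shows that the two vertices adjacent to this unique anomaly can elect a leader in $O(1)$ extra rounds. The $\Omega(D)$ bound then follows from the known leader-election lower bound of \cite{KPPRT13:PODC}, whose proof is a symmetry argument over the random ID assignment and coin flips (many vertices have identically distributed views, so with constant probability either zero or several elect themselves) --- not a two-instance ID swap. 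If you want to salvage a self-contained argument, you would need to reproduce that kind of global symmetry argument; the single-vertex coupling you propose cannot yield the theorem. (Your diameter/size bookkeeping and the ``universal $D(n)$'' padding are fine; the gap is entirely in the forcing step.)
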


\begin{proof}
For a given $n$ and a function $D(n)$, we construct the following lower bound graph $G$ of $\Theta(n)$ nodes and diameter $\Theta(D(n))$.
Let $d \ge 1$ be the largest integer such that $n = 4d D(n) + \ell$, for $0 \le \ell < 4D$; we will construct a graph $G$ of $n' = n - \ell \in \Theta(n)$ vertices and diameter $D' = \lfloor (n - \ell)/8d \rfloor \in \Theta(D)$.
Let $u_0,\dots,u_{n'-1}$ be the vertices of $G$.
We will consider the set of \emph{bridge vertices} defined as $\{u_i \in V(G) \mid \exists k\ge 0\colon i = k d < n'\}$ to describe the edges of $G$; vertices not in this set are the \emph{non-bridge vertices} of $G$.
That is, for every bridge vertex $u_i$, we add the arc edges $(u_i,u_{i+1}),(u_i,u_{i-1}),\dots,(u_i,u_{i+d}),(u_i,u_{i-d})$ (indices are modulo $n'$).
See \Cref{fig:lbGraph} for a concrete instance of this graph.
{
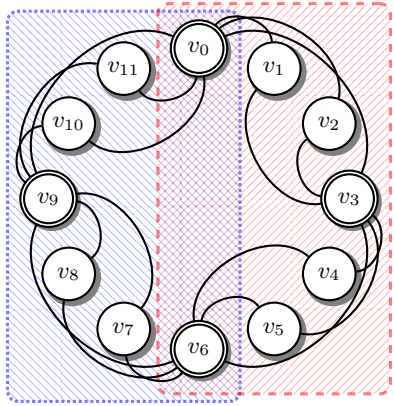
\begin{figure}[t]
\begin{center}
\pgfdeclarelayer{background}
\pgfdeclarelayer{inbetween}
\pgfsetlayers{background,inbetween,main}
\begin{tikzpicture}
\tikzstyle{link}=[-,black,thick,auto,bend angle=70,bend left]
\tikzstyle{arc}=[draw=black,rectangle,rounded corners,fill=lightgray,font=\normalsize]
\tikzstyle{arcNeighborhoodZ}=[semitransparent,very thick,draw=black,densely dotted,rectangle,rounded corners]
\tikzstyle{arcNeighborhoodZ1}=[arcNeighborhoodZ,dashed]
\small 
\node[b]  (v0)  at (90:2cm)                      {$v_{0}$};
\node[v]  (v1)  at (90-360/12:2cm)                {$v_{1}$};
\node[v]  (v2)  at (90-2*360/12:2cm)               {$v_{2}$};

\node[b]  (v3)  at (90-3*360/12:2cm)               {$v_{3}$};
\node[v]  (v4)  at (90-4*360/12:2cm)               {$v_{4}$};
\node[v]  (v5)  at (90-5*360/12:2cm)               {$v_{5}$};

\node[b]  (v6)  at (90-6*360/12:2cm)               {$v_{6}$};
\node[v]  (v7)  at (90-7*360/12:2cm)                 {$v_{7}$};
\node[v]  (v8) at (90-8*360/12:2cm)                {$v_{8}$};

\node[b]  (v9) at (90-9*360/12:2cm)                {$v_{9}$};
\node[v]  (v10) at (90-10*360/12:2cm)               {$v_{10}$};
\node[v]  (v11) at (90-11*360/12:2cm)                {$v_{11}$};

\draw[link] (v0) to (v1);
\draw[link,bend left] (v0.north east) to (v2);
\draw[link] (v0) to (v3);
\draw[link] (v0) to (v11);
\draw[link] (v0) to (v10);

\draw[link] (v3) to (v4);
\draw[link] (v3) to (v5);
\draw[link] (v3) to (v6);
\draw[link] (v3) to (v2);
\draw[link] (v3) to (v1);

\draw[link] (v6) to (v7);
\draw[link] (v6) to (v8);
\draw[link] (v6) to (v9);
\draw[link] (v6) to (v5);
\draw[link] (v6) to (v4);

\draw[link] (v9) to (v10);
\draw[link] (v9) to (v11);
\draw[link] (v9) to (v0);
\draw[link] (v9) to (v8);
\draw[link] (v9) to (v7);

\begin{pgfonlayer}{background}
  \node[arcNeighborhoodZ,fit=(v9) (v10) (v11) (v8) (v7) (v6) (v0),label={[yshift=1.30cm,font=\normalsize]below:$$},yshift=-0.1cm,scale=1.05,blue,fill=blue!40,pattern=north west lines,pattern color=blue] (C0) {};
  \node[arcNeighborhoodZ1,fit=(v3) (v0) (v1) (v2) (v4) (v5) (v6),label={[xshift=1.35cm,font=\normalsize]left:$$},scale=1.05,red,fill=red!70,pattern=north east lines,pattern color=red] (C1) {};
\end{pgfonlayer}
\end{tikzpicture}
\end{center}
\caption{The Lower Bound Graph of Theorem~\ref{thm:MCDS-ST-lowerbound} for $n=12$ and diameter $2$. Bridge vertices are marked by a double frame. The shaded regions represent two bridge clusters that partition the vertices into edge-disjoint sets.}
\label{fig:lbGraph}
\end{figure}
}

We first observe that solving maximal clique in this graph provides a leader node by simply running the $O(1)$ time leader election algorithm of \cite{KPPRT-LE-ICDCN13} on the clique, thus showing that maximal clique takes $\Omega(D)$ time. 

We now describe how to solve the leader election problem on $G$ given an \mcds-algorithm or an \st-algorithm. 
Let $b_0,\dots,b_{n'/4d-1}$ be an ordering of the bridge vertices according to their adjacencies in $G$.
As there are no edges between non-bridge vertices, any \mcds $M$ must contain all except possibly $1$ bridge vertex to guarantee connectivity.
Moreover, the fact that every bridge vertex $b_i$ dominates $b_{i-1}$ and $b_{i+1}$ (modulo $n'/4d$) implies that $M$ must omit a bridge vertex to be minimal.

\begin{observation} \label{obs:mcds}
If $M$ is an \mcds of $G$, then there is exactly one bridge vertex $b_i \in G$ such that $b_i \notin M$.
\end{observation}

We call the subgraph that consists of a bridge vertex $b_i$ and its adjacent vertices a \emph{bridge cluster}.
Analogously to Observation~\ref{obs:mcds}, we have the following:

\begin{observation} \label{obs:st}
Let $B$ be a partitioning of $G$ into edge-disjoint bridge clusters and let $S$ be a spanning tree of $G$.
Then, there is exactly one bridge cluster $b \in B$ such that the subgraph $b \cap S$ is disconnected.
\end{observation}

Suppose that $R$ is an algorithm that solves \st (the argument is analogous for \mcds) with probability $p$ in time $T$.
We first run $R$ to obtain a spanning tree of $G$ and then instruct every bridge vertex to check whether its cluster is connected.
By construction, every vertex locally knows if it is a bridge vertex since non-bridge vertices have exactly $2$ edges while bridge vertices have degree $>2$.
By Observation~\ref{obs:st}, exactly $2$ bridge vertices $b_i$ and $b_{i+1}$ will determine that their (overlapping) clusters are disconnected.
The nodes $b_i$ and $b_{i+1}$ determine which of them has the greater id; this node then elects itself as the leader, while all other nodes enter the non-elected state.
Thus there is an algorithm that elects a leader in $O(T)$ rounds with probability $p$.

It was shown in Theorem~3.13 of \cite{KPPRT13:PODC} that  there is a class of graphs $G_n$ with diameter $D(n)$ such that leader election takes $\Omega(D(n))$ rounds with constant probability.
The proof of this result relies on the fact that the vertices of $G_n$ can be partitioned into $4$ disjoint but symmetric sets $C_1,\dots,C_4$ such that the distance between $C_1$ and $C_3$ (resp.\ $C_2$ and $C_4$) is $\Omega(D)$.
It is straightforward to check that these properties also hold true in our graph class $G$.
In particular, all bridge vertices observe the same round $r$-neighborhood of $G$, for all $r \ge 1$.
Thus the proof of Theorem~3.13 in \cite{KPPRT13:PODC} can be adapted to our graph $G$.
(We defer the details of this adaptation to the full version of the paper.)
Together with the above reduction from leader election, 
this implies the sought time bound of $\Omega(D)$ rounds (with constant probability) for computing a minimal connected dominating set and finding a spanning tree on $G$.
\end{proof}
}
%
%

\onlyLong{
\vspace{-0.5cm}
\section{Distributed Algorithms for Other Hypergraph Problems}
\vspace{-0.2cm}
\label{sec:other}
Many algorithms in this section will simulate an algorithm for finding  a MIS on a (standard) graph developed by Luby \cite{Luby86} as a subroutine. 
One version of this algorithm is this: (1) Randomly assign unique priorities to nodes in $G$ (which can be achieved w.h.p. by having each node in $G$ randomly pick an integer between $1$ and $n^4$). (2) We mark and add all nodes that has higher priority than all its neighbors to the independent set. (3) We remove these marked nodes and their neighbors from the graph and repeat the procedure. Luby \cite{Luby86} shows that this procedure will repeat only $O(\log n)$ times in expectation. So, it is sufficient to get $\tilde O(1)$ time if our algorithms can simulate the three steps above in $\tilde O(1)$ time.

\onlyLong{\subsection{Maximal Clique} \label{sec:clique}}


\begin{theorem}
Maximal clique can be computed in $\tilde O(D)$ time in the CONGEST vertex-centric model and $\tilde O(D+\dimension)$-time in the CONGEST server-client model, where $D$ is the network (i.e., server graph or the server-client bipartite graph) diameter and $\dimension$ is the hypergraph dimension.
\end{theorem}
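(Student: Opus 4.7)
As noted earlier in the excerpt, a maximal clique in the hypergraph $\cH$ is exactly a maximal clique in the server graph $G(\cH)$, which in turn is an MIS in its complement graph $G(\cH)^c$. So the plan is to simulate Luby's $O(\log n)$-round MIS algorithm on $G(\cH)^c$, using $G(\cH)$ itself as the communication substrate. The per-round challenge is that Luby's algorithm requires each vertex $v$ to learn the maximum priority among its neighbors in $G(\cH)^c$, i.e., among its \emph{non}-neighbors in $G(\cH)$, which $v$ cannot reach in one hop.

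First, I would elect a leader and build a BFS tree $T$ of $G(\cH)$ in $O(D)$ rounds. I would then run Luby's algorithm on $G(\cH)^c$ for $O(\log n)$ phases; in each phase, every surviving candidate draws a uniform random priority $r(v)\in\{1,\dots,n^4\}$. To determine the Luby winners of a phase without having direct communication on $G(\cH)^c$, I would implement a greedy "winner-extraction" subroutine on $T$: repeatedly (i) convergecast the global maximum priority $r(v^*)$ to the root in $O(D)$ rounds, (ii) broadcast the identifier of $v^*$ down $T$ in $O(D)$ rounds, (iii) commit $v^*$ to the clique, and (iv) permanently drop every non-neighbor of $v^*$ in $G(\cH)$, since any such vertex is a complement-neighbor of a strictly higher-priority committed vertex and hence cannot be a Luby winner. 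The invariant after the subroutine halts is precisely that the committed vertices are exactly those satisfying Luby's "maximum in complement closed neighborhood" condition for this phase, and the surviving candidates proceed to the next phase.

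The correctness of the simulation follows from Luby's analysis: the candidate set shrinks by a constant factor per phase in expectation, so $O(\log n)$ phases suffice w.h.p. For the running time, the main obstacle is that, phrased naively, the winner-extraction subroutine costs $O(D)$ per committed vertex, and the total number of committed vertices equals the size of the final clique, which can be $\Theta(n)$. I would overcome this by exploiting the fact that winners committed in the \emph{same} Luby phase form a clique in $G(\cH)$ and are therefore pairwise within distance one; this lets the convergecasts and broadcasts for them be pipelined along $T$ (and lets them coordinate elimination of their non-neighbors simultaneously) without creating congestion, so that an entire Luby phase is completed in $\tilde O(D)$ rounds. Summing over the $O(\log n)$ phases gives the $\tilde O(D)$ bound in the vertex-centric CONGEST model.

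For the server-client model, the same skeleton applies, but every $G(\cH)$-adjacency test now has to be relayed through a client vertex (the hyperedge that witnesses the adjacency). A client of degree up to $\dimension$ may have to forward each winner announcement to all incident servers, and in CONGEST it can transmit only $O(\log n)$ bits per round, giving an additive $O(\dimension)$ overhead per broadcast, which after pipelining contributes an additive $\dimension$ to the total. This yields the claimed $\tilde O(D+\dimension)$ bound.
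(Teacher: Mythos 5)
There is a genuine gap in the running-time argument, and it stems from a missing structural idea. You reduce to MIS on the complement of the \emph{entire} server graph $G(\cH)^c$ and then extract winners by repeatedly convergecasting the global maximum priority, committing it, and eliminating its non-neighbors. Each such extraction is inherently \emph{adaptive}: the $i$-th global maximum is defined only over the survivors of the $(i-1)$-th elimination, so the $i$-th convergecast cannot begin until the $(i-1)$-th broadcast has completed. ``Pipelining'' cannot remove this data dependency --- the fact that the committed vertices are pairwise adjacent in $G(\cH)$ is irrelevant, since the obstacle is not congestion on tree edges but the sequential definition of what is being convergecast. Run to completion, your subroutine is the greedy (permutation-order) MIS of the complement, which costs $\Theta(D)$ rounds per committed vertex and hence $\Theta(kD)$ rounds for a clique of size $k$, possibly $\Theta(nD)$. (Your stated invariant is also not quite right: the greedy extraction commits a superset of the Luby local maxima, though this only affects the framing, not correctness.) If instead you stop after committing only the true Luby winners of a phase, you face a different problem: a vertex $v$ is a winner iff every vertex of higher priority is a $G(\cH)$-neighbor of $v$, and verifying this requires $v$ to know its \emph{global} priority rank, which there is no cheap way to compute for all vertices simultaneously over an arbitrary-diameter tree in CONGEST.

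The paper's proof sidesteps all of this with one observation you are missing: it suffices to output a maximal clique \emph{containing a fixed node $s$} (chosen by leader election in $O(D)$ rounds), and any such clique lives entirely inside $N(s)$. All candidates are then at distance one from $s$, so $s$ can hand out priorities as an explicit permutation $1,\dots,|N(s)|$ in a single round, and a node with priority $i$ can decide locally whether it is a Luby winner in $\bar G_S$ by counting how many of its $G_S$-neighbors received priorities above $i$ and comparing with $|N(s)|-i$; the elimination step is handled by a single count of marked nodes relayed through $s$. Each Luby iteration therefore costs $O(1)$ rounds (plus an $O(\dimension)$ factor in the server-client model for relaying priorities through clients), giving $\tilde O(D)$ total. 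Without some analogue of this restriction to a one-hop neighborhood (or a genuinely new way to test ``my priority beats all my non-neighbors' '' globally), your approach does not reach the claimed bound.
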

\onlyLong{
\begin{proof}
Recall that in this problem, we want a maximal set $S$ of hypernodes such that every two hypernodes $u$ and $v$ in $S$ are contained in some common hyperedge. This is equivalent to finding a maximal clique in the server graph (defined in \Cref{sec:prelim}). 

Since the underlying network of the vertex-centric model is exactly the server graph, we can easily find a maximal clique in this model, as follows. Pick any node $s$. (This can be done in $O(D)$ time by, e.g. picking a node with smallest ID or using a leader election algorithm.) Let $S$ be the set of all neighbors of $s$. Let $G_S$ be the subgraph of the server graph induced by nodes in $S$. Observe that if $M$ is a maximal clique in $G_S$ then $\{s\}\cup M$ is a maximal clique in $G$. So, it is sufficient to find a maximal clique in $G_S$. Observe further that if $\bar{G}_S$ is the complement graph of $G_S$ (i.e. an edge $(u, v)$ is in $\bar{G}_S$ if and only if it is not in $G_S$), then finding a maximal clique in $G_S$ is equivalent to finding a MIS in $\bar{G}_S$. 

We now simulate Luby's algorithm to find a MIS in $\bar{G}_S$.  We simulate the first step by letting node $s$ generate a random permutation of nodes in $\bar{G}_S$, say $v_1, v_2, \ldots, v_{|S|}$, and send a priority $i$ to node $v_i$. This can be done in one round since all nodes in $\bar{G}_S$ are neighbors of $s$. Now, for every node $v$ in $\bar{G}_S$ of priority, say $i$,  checks whether its priority is higher than all its neighbors in $\bar{G}_S$ (as required by the second step of Luby's algorithm). Observe that this is the case if and only if the priorities $i+1, i+2, \ldots |S|$ are given to $v$'s neighbors in $G_S$. Node $v$ can check this in one round by receiving the priorities of all its neighbors in $G_S$. For simulating the third step, each node $v$ has to know whether it has a neighbor in $\bar{G}_S$ that is marked. We do this by counting the number of marked nodes (every node tells $s$ whether it is marked or not). Let $c$ be such number. Then, every node $v$ counts how many of its neighbors in $G_S$ are marked. If this is less than $c$, then $v$ has a neighbor in $\bar{G}_s$ that is marked. This takes $O(1)$ rounds.

The above simulation of Luby's algorithm can be extended to the server-client model with an extra $O(\dimension)$ factor cost: For $s$ to distribute the priorities in the first step, it has to send up to $\dimension$ priorities to the same hyperedge. For the second step, where each node $v$ has to check whether its priority is higher than all its neighbors in $\bar{G}_S$, $v$ has to receive the priorities of all its neighbors in $G_S$, and it might have to receive up to $\dimension$ priorities from the same hyperedge. Finally, for the third step where every node has to know the number of neighbors in $G_S$ that are marked, it has to received the list of IDs of its marked neighbors, and it might have to receive up to $\dimension$ IDs from the same hyperedge. 
\end{proof}
}
Note that the dependence on the diameter in the running time is necessary, as shown in \Cref{thm:MCDS-ST-lowerbound}.
\onlyLong{\subsection{$(\Delta+1)$-Coloring}}
%
\onlyShort{
  The $(\Delta+1)$-coloring problem requires a coloring of the nodes such that no hyperedge is monochromatic. This can achieved by ensuring that at least vertices per hyperedge have distinct colors, which can be ensured using standard graph coloring (cf.\ full paper).
Also, by solving MIS on the line graph of a given hypergraph, we can get a maximal matching.
\begin{theorem}
%
The $(\Delta+1)$-coloring problem on hypergraphs has the same complexity as the $(\Delta+1)$-coloring problem on standard (two-dimensional) graphs; in particular, it can be solved in $O(\log n)$ time. This holds in both vertex-centric and server-client representations and in the CONGEST model. 
The maximal matching problem on hypergraphs can be solved in $O(\log n)$ time in the CONGEST server-client model. 
\end{theorem}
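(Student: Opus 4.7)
For the $(\Delta+1)$-coloring part, the plan is to reduce hypergraph coloring to a standard graph coloring problem on an auxiliary graph $G'$ whose maximum degree is at most $\Delta$. For each hyperedge $e$ I will designate an arbitrary pair of representative hypernodes $u_e, v_e \in e$ (for instance the two highest-ID members of $e$), and add the edge $\{u_e,v_e\}$ to $G'$. Any proper vertex coloring of $G'$ forces $u_e$ and $v_e$ to receive distinct colors, so no hyperedge of $\cH$ is monochromatic, which is exactly the $(\Delta+1)$-coloring condition for the hypergraph. Since each hypernode lies in at most $\Delta$ hyperedges and contributes at most one edge to $G'$ per such hyperedge, the maximum degree of $G'$ is at most $\Delta$, and hence $G'$ admits a $(\Delta+1)$-coloring that can be produced in $O(\log n)$ rounds in the CONGEST model by any standard randomized distributed coloring algorithm (e.g.\ \cite{elkin-book}).

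The next step is to argue that both the construction of $G'$ and the simulation of the standard coloring algorithm on top of the hypergraph network incur only $O(1)$ overhead per round. In the vertex-centric representation, the edges of $G'$ are a subset of the server graph, so after an $O(1)$-round preprocessing phase in which each client announces which two of its incident servers it has selected as its representatives, the coloring algorithm runs directly. In the server-client representation, a single CONGEST round of communication along the $G'$-edge $\{u_e,v_e\}$ can be emulated by routing through client $e$ in two rounds; since each client has to forward messages for only \emph{two} designated servers, there is no congestion bottleneck. The matching ``hardness'' direction of the statement is immediate because every standard graph is a hypergraph of dimension two, so any hypergraph $(\Delta+1)$-coloring algorithm is in particular a graph $(\Delta+1)$-coloring algorithm and thus must pay the known complexity.

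For the maximal matching part, the plan is to reduce it to \mis on the line hypergraph $L(\cH)$, whose vertices are the hyperedges of $\cH$ and in which two hyperedges are adjacent iff they share a hypernode. I would then simulate Luby's \mis algorithm on $L(\cH)$ directly inside the server-client network: each client draws a random priority and sends it to all of its incident servers; each server broadcasts back the maximum priority it has seen among its incident clients; a client joins the matching iff its priority matches the maximum reported by \emph{every} incident server. Deactivating matched clients and their conflicting neighbors is a further two-round server-mediated exchange (each server reports whether any of its clients has been matched). Each step costs $O(1)$ CONGEST rounds, and Luby's analysis gives $O(\log m)$ iterations, matching the claimed $O(\log n)$ bound in terms of the network size.

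The main obstacle in both reductions is ensuring that the simulation inside the server-client model does not blow up in the CONGEST setting, since a single client can be incident to $\Omega(\dimension)$ servers and a single server to $\Omega(\Delta)$ clients. For coloring this is resolved because each client relays messages for only its two representatives, and for matching it is resolved because each server only has to aggregate and forward a single $O(\log n)$-bit value (a maximum priority, or a single ``matched'' flag) per round; no pipelining or scheduling is required in either case.
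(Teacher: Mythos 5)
Your proposal is correct and follows essentially the same route as the paper: for coloring, it reduces to a proper coloring of an auxiliary degree-$\le\Delta$ graph obtained by picking two representative hypernodes per hyperedge, and for maximal matching it reduces to \mis on the line graph of $\cH$ and simulates Luby's algorithm with $O(1)$-round server-mediated max-priority aggregation. Your additional remarks on congestion-free simulation and the trivial hardness direction are consistent with (and slightly more explicit than) the paper's argument.
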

 }
 \onlyLong{
\begin{theorem}
%
The $(\Delta+1)$-coloring problem on hypergraphs has the same complexity as the $(\Delta+1)$-coloring problem on standard (two-dimensional) graphs; in particular, it can be solved in $O(\log n)$ time. This holds in both vertex-centric and server-client representations and even in the CONGEST model. 
\end{theorem}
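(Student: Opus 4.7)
\bigskip

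\noindent\textbf{Proof proposal.} The plan is to reduce hypergraph $(\Delta+1)$-coloring to ordinary graph $(\Delta+1)$-coloring on a carefully chosen auxiliary (standard) graph $G_{\text{aux}}$ of maximum degree at most $\Delta$, and then invoke the known $O(\log n)$-round randomized algorithm for standard graphs in a way that can be simulated on the hypergraph communication network without congestion. For the matching direction, since a standard graph is a hypergraph of dimension $2$, every lower bound for the standard problem automatically transfers, so the two complexities coincide.

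\smallskip

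\noindent\emph{Construction of $G_{\text{aux}}$.} For each hyperedge $e \in E(\cH)$ with $|e|\geq 2$, designate two canonical hypernodes $u_e,v_e\in e$ (say, the two with the smallest identifiers in $e$) and put the single edge $(u_e,v_e)$ into $E(G_{\text{aux}})$. Any proper coloring of $G_{\text{aux}}$ assigns different colors to $u_e$ and $v_e$, so $e$ is never monochromatic, which is exactly the $(\Delta+1)$-coloring requirement for hypergraphs. Moreover, each hypernode $w$ is incident to at most one auxiliary edge per hyperedge containing it, so $\deg_{G_{\text{aux}}}(w)\leq \Delta$. The canonical pair $(u_e,v_e)$ can be selected locally in $O(1)$ rounds: in the vertex-centric model every node in $e$ already knows the identifiers of all members of $e$; in the server-client model the client $e$ sees the identifiers of its servers and can announce the two chosen ones.

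\smallskip

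\noindent\emph{Simulating the standard $(\Delta+1)$-coloring algorithm.} Run the classical $O(\log n)$-round randomized $(\Delta+1)$-coloring algorithm on $G_{\text{aux}}$ (see e.g.\ \cite{elkin-book,Pel00}); it suffices because $\Delta(G_{\text{aux}})\leq \Delta$. Each round of that algorithm needs, per auxiliary edge $(u_e,v_e)$, a single $O(\log n)$-bit exchange between its endpoints. In the vertex-centric model $u_e$ and $v_e$ share a direct communication link (both lie in $e$), so the exchange is one round. In the server-client model the exchange is relayed through the client $e$ in two rounds; crucially, by construction each client is the relay for \emph{at most one} auxiliary edge, hence no congestion arises and each auxiliary round still costs $O(1)$ real rounds. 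Summing over $O(\log n)$ rounds yields the claimed $O(\log n)$ bound in both representations, entirely within the CONGEST bandwidth.

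\smallskip

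\noindent\emph{Main obstacle and wrap-up.} The only delicate point is bookkeeping in the server-client model: one must be sure that the $O(\log n)$-bit simulation never forces a single client or server to forward $\omega(1)$ distinct messages per round. This is exactly what the ``one canonical pair per hyperedge'' choice buys us, so the obstacle is handled by the construction itself. Equivalence of complexity with the standard-graph problem then follows: the algorithm above gives the upper bound, and the trivial embedding of a standard graph as a dimension-$2$ hypergraph shows that any hypergraph $(\Delta+1)$-coloring algorithm solves the standard problem in the same round complexity, so known lower bounds match.
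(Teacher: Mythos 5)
Your proposal is correct and follows essentially the same route as the paper: reduce to a standard graph on the same vertex set by selecting one pair of hypernodes per hyperedge (so the maximum degree stays at most $\Delta$ and a proper coloring of the auxiliary graph leaves no hyperedge monochromatic), then simulate a standard $O(\log n)$-round $(\Delta+1)$-coloring algorithm on the hypergraph network. Your additional attention to congestion in the server-client simulation and to the reverse (lower-bound) direction only makes the argument more complete than the paper's sketch.
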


\begin{proof}
Recall that in the $(\Delta+1)$-coloring problem we want to color hypernodes so that there is no monochromatic hyperedge, i.e. all hypernodes it contains have the same color. We solve this problem by converting a hypergraph $\cH$ to a two-dimensional graph $G$ on the same set of nodes as follows. For every hyperedge $e$ in $\cH$, pick arbitrary two distinct hypernodes it contains, say $u$ and $v$, and create an edge $e'=(u, v)$ in $G$. Observe that $G$ has maximum degree at most $\Delta$ and any valid coloring in $G$ will be a valid coloring in $H$ (since if an edge $e$ in $\cH$ is monochromatic, then the corresponding edge $e'$ in $G$ will also be monochromatic). Thus, it is sufficient to find a $(\Delta+1)$ coloring in $G$. We can do this by simulating any $(\Delta+1)$-coloring algorithm for $G$ on $\cH$. This shows that $(\Delta+1)$-coloring on hypergraphs is {\em as easy as}  $(\Delta+1)$-coloring on standard graphs. 
\end{proof}
}


\onlyLong{\subsection{Maximal Matching}
\begin{theorem}
The maximal matching problem on hypergraphs can be solved in $O(\log n)$ time in the CONGEST server-client model. 
\end{theorem}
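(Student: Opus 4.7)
The plan is to reduce the hypergraph maximal matching problem to a MIS computation on the \emph{line graph} $L(\cH)$ of the input hypergraph, and then simulate Luby's algorithm on $L(\cH)$ directly in the server-client bipartite realization with only constant overhead per Luby round. The vertices of $L(\cH)$ are the hyperedges of $\cH$ (equivalently, the clients) and two such vertices are adjacent iff they share at least one hypernode (equivalently, iff they share at least one server in the bipartite realization). A matching in $\cH$ is exactly an independent set in $L(\cH)$, so a MIS in $L(\cH)$ is precisely a maximal matching in $\cH$.

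First, I would show how to execute a single round of Luby's algorithm on $L(\cH)$ in $O(1)$ rounds of the server-client CONGEST model. Each alive client $c$ samples a uniform priority from $\{1,\ldots,m^{4}\}$, so priorities are pairwise distinct w.h.p., and sends this $O(\log m)$-bit value to each of its incident servers. Every server then computes the maximum over the priorities it has just received and relays this single value back on each of its incident edges. A client $c$ declares itself a \emph{winner} iff its own priority equals the value reported by every one of its incident servers; by uniqueness of priorities this is equivalent to $c$ having the strictly largest priority among all clients that share some server with it, i.e., among its neighbors in $L(\cH)$. Winners are added to the output matching; a second short exchange (each winner signals its servers, which in turn broadcast a ``used'' flag to their incident clients) then removes from the hypergraph all winners together with all their $L(\cH)$-neighbors. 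Each of these phases sends at most one $O(\log m)$-bit message along each bipartite edge, well within the $O(\log n)$ CONGEST bandwidth when $m = n^{O(1)}$.

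Then correctness is immediate from the reduction: the winners in each Luby round form a set of pairwise hypernode-disjoint hyperedges, so adding them preserves the matching invariant, and maximality is forced once no client survives. The round complexity follows from Luby's analysis applied to $L(\cH)$, giving $O(\log|V(L(\cH))|) = O(\log m)$ rounds in expectation and w.h.p., which is $O(\log n)$ under the standard assumption $m = n^{O(1)}$.

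I do not expect a substantive technical obstacle; the conceptual point is simply that $L(\cH)$-neighborhood information is exactly what a shared server can aggregate in one bidirectional exchange, so Luby's algorithm on $L(\cH)$ does not need to be run on any explicit line graph. The only bookkeeping issue is to verify that the four message types exchanged per server-client edge per Luby round (a priority, the per-server maximum, a winner announcement, and a ``used'' flag) each fit into $O(\log n)$ bits, which holds as soon as priorities are drawn from a $\mathrm{poly}(m)$-sized range.
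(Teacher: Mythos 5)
Your proposal is correct and follows essentially the same route as the paper: reduce to MIS on the line graph of the hypergraph and simulate Luby's algorithm, using each server to aggregate and relay the maximum priority among its incident clients so that every Luby iteration costs $O(1)$ CONGEST rounds. Your added remark that the bound is really $O(\log m)$ and becomes $O(\log n)$ only when $m = n^{O(1)}$ is a fair observation that the paper itself makes in the introduction.
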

}
\onlyLong{
\begin{proof}
Recall that this problem on a hypergraph $\cH$ asks for a maximal set $S\subseteq E(\cH)$ of {\em disjoint} hyperedges, i.e. $e\cap e'=\emptyset$ for all $e\neq e'$ in $S$. Consider the following {\em line graph} $G$: nodes of $G$ is the hyperedges in $\cH$, i.e. $V(G)=E(\cH)$, and there is an edge between two nodes $e, e'\in V(G)$ if and only if their corresponding hyperedges overlap, i.e. $e\cap e'\neq \emptyset$. Clearly, a set $S$ is a maximal matching in $\cH$ if and only if it is a MIS in $G$. Thus, it is left to find a MIS in $G$. 
This can be done by simulating Luby's algorithm \cite{Luby86}.
Observe that the first and second steps need no communication. For the third step, every node in $G$ (hyperedges in $\cH$) only needs to know the highest priority among its neighbors. This can be done in $O(1)$ rounds by having each hyperedge (client in the server-client representation) in $\cH$ send its priority to all hypernodes (server) that it contains, then these hypernodes sends the maximum priority that it receives to all hyperedges that contain it. 
So, we can implement the three steps of Luby's algorithm in $O(1)$ rounds.
\end{proof}
}
}

\section{Concluding Remarks and Open Problems}
\label{sec:conc}
\onlyShort{\vspace{-0.1in}}

Our work shows that while some local symmetry breaking problems such as coloring and maximal matching can be solved in polylogarithmic rounds in both the LOCAL and CONGEST models, for many others such as MIS, hitting set, and maximal clique it remains a challenge to obtain polylogarithmic time algorithms in the CONGEST model. This dichotomy manifests in hypergraphs
of higher dimension. Understanding this dichotomy can be helpful to make further progress 
in improving the bounds or showing lower bounds, especially in the CONGEST model. In particular, an important open question is whether we can show super-polylogarithmic lower bounds
for MIS for hypergraphs of high dimension in the CONGEST model?

Our results also have implications to solving hypergraph problems in the classical PRAM model.
Our CONGEST model algorithms can be translated into PRAM algorithms running in (essentially) the same number of rounds (up to polylogarithmic factors).
 In particular,  improving over the $\tilde O(\Delta^{o(1)})$ round algorithm for MIS in the CONGEST model can point to better PRAM algorithms for MIS which has been eluding researchers till now. A major question is whether $O(\polylog n)$ or even $O(\polylog m)$  round algorithms are possible
 in the CONGEST model for MIS (as shown here, the answer is ``yes'' in the LOCAL model).

Another aspect of this work, which was one of our main motivations, is using hypergraph algorithms for  solving problems in graphs efficiently.  In particular, our hypergraph MIS algorithm leads to  fast distributed algorithms for the  BMDS and the MCDS problems.
In particular, it will be interesting to see if one can give an algorithm for MCDS that essentially matches
the lower bound of $\tilde{\Omega}(D + \sqrt{n})$ (when $D$ is large).
\enlargethispage{1\baselineskip}


\onlyShort{\vspace{-0.2in}}
\bibliography{papers}

\end{document}